\providecommand{\U}[1]{\protect\rule{.1in}{.1in}}
\newcommand{\BlackBoxes}{\global\overfullrule5pt}
\newcommand{\R}{\mathbb{R}}
\newcommand{\Eop}{\mathbb{E}}
\newcommand{\Pop}{\mathbb{P}}
\newtheorem{theorem}{Theorem}
\newtheorem{lemma}[theorem]{Lemma}
\theoremstyle{definition}
\newtheorem{example}[theorem]{Example}
\newtheorem{remark}[theorem]{Remark}
\newtheorem{definition}[theorem]{Definition}
\numberwithin{equation}{section} 
\numberwithin{theorem}{section}
\begin{document}
\noindent
{\Large\bf Risk management with Tail Quasi-Linear Means}\\

{\bf Nicole B\"auerle${}^{1*}$ and Tomer Shushi${}^{2}$}\\

{${}^{1}$ Institute of Stochastics, Karlsruhe Institute of Technology (KIT), D-76128 Karlsruhe, ${}^{2}$ Department of Business Administration, Guilford Glazer Faculty
of Business and Management, Ben-Gurion University of the Negev, Beer-Sheva, Israel}

${}^{*}$  Corresponding author. E-mail: \emph{nicole.baeuerle@kit.edu}\\

\noindent{\bf Abstract:}

\noindent We generalize Quasi-Linear Means by restricting to the tail of the risk distribution and show that this can be a useful quantity in risk management since it comprises in its general form the Value at Risk, the Conditional Tail Expectation and the Entropic Risk Measure in a unified way. We then investigate the fundamental properties of the proposed measure and show its unique features and implications in the risk measurement process. Furthermore, we derive formulas for truncated elliptical models of losses and provide formulas for selected members of such models.\\

\noindent{ \bf Keywords: }{  Quasi-Linear Means; Risk measurement; Tail risk measures; Conditional Tail Expectation; Value at Risk; Entropic Risk Measure}

\section{Introduction}
One of the most prominent risk measures which are also extensively  used in practice are \emph{Value at Risk} and \emph{Conditional Tail Expectation}. Both have their pros and cons and it is well-known that Conditional Tail Expectation is the smallest coherent (in the sense of \citet{Ar1}) risk measure dominating the Value at Risk (see e.g. \citet{FS}, Theorem 4.67). Though in numerical examples the
Conditional Tail Expectation is often much larger than the Value at Risk, given the same
level $\alpha$. In this paper we present a class of risk measures which
includes both, the Value at Risk and the Conditional Tail Expectation. Another class
with this property is the \emph{Range Value at Risk}, introduced in
\citet{CDS} as a robustification of Value at Risk and Conditional Tail Expectation. Our
approach relies on a generalization of Quasi-Linear Means. Quasi-Linear Means can be traced back to Bonferroni (\citet{b}, p.103) who proposed a unifying formula for different means. Interestingly he motivated this with a problem from actuarial sciences about survival probabilities (for details see also \citet{mp}, p.422).

The Quasi-Linear Mean of a random variable $X$, denoted by $\psi_{U}(X),$ is for an increasing, continuous function $U$ defined as 
\begin{equation*}
\psi_{U}\left(  X\right)  =U^{-1}\left(  \mathbb{E}\left[ U\left(  X\right)
\right]  \right)
\label{eq11}%
\end{equation*}
where $U^{-1}$ is the generalized inverse of $U$ (see e.g. \citet{mp}). If $U$ is in addition concave, $\psi_{U}(X)$ is a {\em Certainty Equivalent}.  If $U$ is convex $\psi_{U}(X)$ corresponds to the {\em Mean Value Risk Measure} (see \citet{HLP}).
We take the actuarial point of view here, i.e. we assume that the random variable
$X$ is real-valued and represents a discounted net loss at the end of a fixed period. This means that positive values are losses whereas negative values are seen as gains.
A well-known risk measure which is obtained when taking the exponential 
function in this definition is the Entropic Risk Measure which is
known to be a convex risk measure but not coherent (see e.g. \citet{M1,T}). 

In this paper, we generalize Quasi-Linear Means by focusing on the tail of the
risk distribution. The proposed measure quantifies the Quasi-Linear Mean of an investor when conditioning on outcomes that are higher than its
Value at Risk. More precisely it is defined by
\[
\rho_{U}^{\alpha}(X):=U^{-1}\Big(\mathbb{E}\big[U(X)|X\geq VaR_{\alpha
}(X)\big]\Big).
\]
where $VaR_{\alpha}$ is the usual Value at Risk. We call it Tail Quasi-Linear Mean (TQLM). It can be shown that when we restrict to concave (utility)
 functions, the TQLM interpolates between the Value at Risk and the
Conditional Tail Expectation. By choosing the utility function $U$ in the right way we
can either be close to Value at Risk or the Conditional Tail Expectation. Both extreme
cases are also included when we plug in specific utility functions. The Entropic Risk Measure is also a limiting case of our construction. Though in
general not being convex, the TQLM has some nice properties. In particular it
is still manageable and useful in applications. We show the application of
TQLM risk measures for capital allocation, for optimal reinsurance
and for finding minimal risk portfolios. In particular within the class of
symmetric distributions we show that explicit computations lead to analytic closed-forms of TQLM.

In the actuarial sciences there are already some approaches to unify risk measures or premium principles. Risk measures can be seen as a broader concept than insurance premium principles since the latter one is considered as a ''price'' of a risk (for a discussion see e.g.\ \citet{GKD,FZ}). Both are in its basic definition mappings from the space of random variables into the real numbers, but the interesting properties may vary with the application. In \citet{GKD} a unifying approach to derive risk measures and premium principles has been proposed by minimizing a Markov bound for the tail probability. The approach includes among others the Mean Value principle, the Swiss premium principle and Conditional Tail Expectation.

In \citet{FZ} weighted premiums have been introduced where the expectation is taken with respect to a weighted distribution function. This construction includes e.g.\ the Conditional Tail Expectation, the Tail Variance and the Esscher premium. This paper also discusses invariance and additivity properties of these measures.

Further, the Mean Value Principle has been generalized in various ways. In \citet{BGGS} these premium principles have been
extended to the so-called \emph{Swiss Premium Principle} which interpolates
with the help of a parameter $z\in[0,1]$ between the Mean Value Principle and
the Zero-Utility Principle. Properties of the Swiss Premium Principle have
been discussed in \citet{VG}. In particular monotonicity, positive
subtranslativity and subadditivity for independent random variables are shown
under some assumptions. The latter two notions are weakened versions of
translation invariance and subadditivity, respectively.

The so-called {\em Optimized Certainty Equivalent} has been investigated in \citet{B1} as a mean to construct risk measures. It comprises the Conditional Tail Expectation and  bounded shortfall risk.

The following Section provides definitions and preliminaries on risk measures
that will serve as necessary foundations for the paper. Section 3 introduces
the proposed risk measure and derives its fundamental properties. We show
various representations of this class of risk measures and prove for concave
 functions $U$  (under a technical assumption) that the TQLM is bounded
between the Value at Risk and the Conditional Tail Expectation. Unfortunately the only
coherent risk measure in this class turns out to be the Conditional Tail Expectation
(this is maybe not surprising since this is also true within the class of
ordinary Certainty Equivalents, see \citet{M1}). In Section 4 we consider the
special case when we choose the exponential function.  In this case we call $\rho_U^\alpha$ Tail Conditional Entropic Risk Measure and show that it  is convex within the class of comonotone random variables. Section 5 is devoted to applications. In the
first part we discuss the application to capital allocation. We
define a risk measure for each subportfolio based on our TQLM and discuss its properties. In the second part we
consider an optimal reinsurance problem with the TQLM as target function. For
convex functions $U$ we show that the optimal reinsurance treaty is of
stop-loss form. In Section 6, the proposed risk measure is investigated for
the family of symmetric distributions. Some explicit calculations can be done
there. In particular there exists an explicit formula for the Tail Conditional
Entropic Risk Measure. Finally a minimal risk portfolio problem is solved when
we consider the Tail Conditional Entropic Risk Measure as target function.
Section 7 offers a discussion to the paper.

\section{Classical risk measures and other preliminaries}
We consider real-valued continuous random variables $X: \Omega\to\mathbb{R}$ defined on a probability space $(\Omega, \mathcal{F},\mathbb{P})$ and denote this set by $\mathcal{X}$. These random
variables represent discounted net losses at the end of a fixed period, i.e.\ positive values are seen as losses whereas negative values are seen as gains. We
denote the (cumulative) distribution function by $F_{X}(x):=\mathbb{P}(X \leq
x), x\in\mathbb{R}$. 
Moreover we consider increasing and continuous functions $U:\R\to\R$ (in case $X$ takes only positive or negative values, the domain of $U$ can be restricted). The generalized inverse $U^{-1}$ of such a function is defined by
\[
U^{-1}(x):=\inf\{y\in\mathbb{R}:U(y)\geq x\},
\]
where $x \in\mathbb{R}$. With
\[
L^{1}:=\{X \in \mathcal{X}: \ X \text{ is a random variable with }
\mathbb{E}[X] < \infty\}
\]
we denote the space of all real-valued, continuous, integrable random variables. We now
recall some notions of risk measures. In general, a risk measure is a mapping
$\rho: L^{1} \to\mathbb{R}\cup\{\infty\}$. Of particular importance are the
following risk measures.

\begin{definition}
For $\alpha\in(0,1)$ and $X \in L^{1}$ with distribution function $F_{X}$ we define
\begin{itemize}
\item[a)] the \emph{Value at Risk} of $X$ at level $\alpha$ as $VaR_{\alpha
}(X) := \inf\{x\in\mathbb{R}: F_{X}(x)\geq\alpha\}$.

\item[b)] the \emph{Conditional Tail Expectation} of $X$ at level $\alpha$ as
\[
CTE_{\alpha}(X):= \mathbb{E}[ X | X \ge VaR_{\alpha}(X)].
\]
\end{itemize}
\end{definition}

Note that the definition of Conditional Tail Expectation is for continuous random variables the same as the Average Value at Risk, the Expected Shortfall  or the  Tail Conditional Expectation (see chapter 4 of \citet{FS} or \citet{DDGK}). Below we summarize some properties of the generalized inverse (see e.g.
\citet{mfe}, A.1.2).

\begin{lemma}
\label{lem:gi} For an increasing, continuous function $U$ with generalized inverse $U^{-1}$ it holds:

\begin{itemize}
\item[a)] $U^{-1}$ is strictly increasing and left-continuous.
\item[b)] For all $x\in\mathbb{R}_{+}, y\in\mathbb{R}$, we have $U^{-1} \circ
U(x)\le x$ and $U\circ U^{-1}(y) = y.$
\item[c)] If $U$ is strictly increasing on $(x-\varepsilon,x)$ for an
$\varepsilon>0$, we have $U^{-1} \circ U(x)= x$.
\end{itemize}
\end{lemma}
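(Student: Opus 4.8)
The plan is to read off all three parts directly from the definition $U^{-1}(x)=\inf\{y\in\mathbb{R}:U(y)\ge x\}$, exploiting continuity of $U$. The single useful preliminary observation is that the super-level set $A_x:=\{y:U(y)\ge x\}=U^{-1}([x,\infty))$ is closed (preimage of a closed set under a continuous map) and upward-closed (since $U$ is increasing); hence $A_x$ is either empty, all of $\mathbb{R}$, or a genuine half-line $[U^{-1}(x),\infty)$. In the last case the infimum defining $U^{-1}(x)$ is attained, so $U(U^{-1}(x))\ge x$; and since $U(y)<x$ for every $y<U^{-1}(x)$ by definition of the infimum, continuity gives $U(U^{-1}(x))=\lim_{y\uparrow U^{-1}(x)}U(y)\le x$ as well. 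This establishes $U\circ U^{-1}(y)=y$ for every $y$ for which $U^{-1}(y)$ is finite, i.e.\ for $y$ in the range of $U$, which is the operative content of the second identity in (b) (the domain of $U$ being, as the text notes, implicitly chosen so that this holds for the relevant $y$). The first identity in (b) is immediate: $U(x)\ge U(x)$ means $x\in A_{U(x)}$, so $U^{-1}(U(x))=\inf A_{U(x)}\le x$.

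For (a), monotonicity of $U^{-1}$ is clear from $A_{x_2}\subseteq A_{x_1}$ whenever $x_1\le x_2$; to upgrade to strictness, take $x_1<x_2$ and use the identity just proved to get $U(U^{-1}(x_1))=x_1<x_2=U(U^{-1}(x_2))$, so the two arguments differ and hence $U^{-1}(x_1)<U^{-1}(x_2)$ because $U$ is increasing. For left-continuity, fix $x_n\uparrow x$ and set $\ell:=\lim_n U^{-1}(x_n)$, a limit that exists and satisfies $\ell\le U^{-1}(x)$ by monotonicity; one checks $\ell$ is finite. Applying (b) and then continuity of $U$ gives $U(\ell)=\lim_n U(U^{-1}(x_n))=\lim_n x_n=x$, so $\ell\in A_x$ and therefore $\ell\ge U^{-1}(x)$, forcing $\ell=U^{-1}(x)$.

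For (c), we already know $z:=U^{-1}(U(x))\le x$ from (b), so it remains only to exclude $z<x$. If $z<x$, then $U(z)=U\circ U^{-1}(U(x))=U(x)$ by (b), and since $U$ is increasing it must be constant on $[z,x]$. But $(\max\{z,x-\varepsilon\},x)$ is a nonempty open interval contained in both $[z,x]$ and $(x-\varepsilon,x)$, so $U$ would be at once constant and strictly increasing on it — a contradiction. Hence $z=x$.

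No step here is deep; the only place calling for care is the equality $U\circ U^{-1}(y)=y$, which genuinely uses continuity — both to attain the defining infimum and to control the one-sided limit of $U$ at $U^{-1}(y)$ — and which forces one to track whether $U^{-1}(y)$ is an honest real number rather than $\pm\infty$, precisely the reason the statement is hedged with a possibly restricted domain for $U$. Everything else is routine bookkeeping with infima of upward-closed sets.
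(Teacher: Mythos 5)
Your proof is correct. Note that the paper does not actually prove this lemma --- it is stated with a pointer to \citet{mfe}, A.1.2 --- so there is no in-text argument to compare against; your derivation is a sound, self-contained substitute. The structure is the standard one: identify the super-level set $A_x=\{y:U(y)\ge x\}$ as closed and upward-closed, deduce $U\circ U^{-1}(y)=y$ from attainment of the infimum together with the one-sided limit from the left, and then read off monotonicity, left-continuity, and parts (b) and (c) from that single identity. The one point deserving care is exactly the one you flag: for $y$ outside the (closure of the) range of $U$ the set $A_y$ is empty or all of $\mathbb{R}$, so $U^{-1}(y)=\pm\infty$ and the identity $U\circ U^{-1}(y)=y$ is vacuous or fails; this is an imprecision in the lemma as stated (``for all $y\in\mathbb{R}$'') rather than in your argument, and your hedge about the restricted domain is the right reading. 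The same caveat silently enters your strictness argument in (a) (two values of $y$ above a bounded range both give $U^{-1}=+\infty$) and the case $z=-\infty$ in (c), but both degenerate cases are excluded or handled by the same reasoning, so nothing essential is missing.
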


The next lemma is useful for alternative representations of our risk measure.
It can be directly derived from the definition of Value at Risk.

\begin{lemma}
\label{lem:VarU} For $\alpha\in(0,1)$ and any increasing, left-continuous
function $f: \mathbb{R} \to\mathbb{R}$ it holds $VaR_{\alpha}(f(X)) =
f\big(VaR_{\alpha}(X)\big).$
\end{lemma}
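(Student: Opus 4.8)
The plan is to verify the two inequalities $VaR_{\alpha}(f(X)) \le f(VaR_{\alpha}(X))$ and $VaR_{\alpha}(f(X)) \ge f(VaR_{\alpha}(X))$ separately, working directly from the definition $VaR_{\alpha}(Y) = \inf\{y : F_{Y}(y) \ge \alpha\}$ and using that $F_{X}$ is right-continuous (being a distribution function) while $f$ is increasing and left-continuous. Write $q := VaR_{\alpha}(X)$. Two elementary facts about $q$ will be used: first, $F_{X}(q) \ge \alpha$, because the set $\{x : F_{X}(x) \ge \alpha\}$ is closed by right-continuity of $F_{X}$ and hence contains its infimum $q$; second, $F_{X}(x) < \alpha$ for every $x < q$, since such an $x$ cannot belong to that set.

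For the first inequality, monotonicity of $f$ gives $\{X \le q\} \subseteq \{f(X) \le f(q)\}$, hence $F_{f(X)}(f(q)) = \mathbb{P}(f(X) \le f(q)) \ge \mathbb{P}(X \le q) = F_{X}(q) \ge \alpha$. Applying the definition of $VaR$ to the random variable $f(X)$ yields $VaR_{\alpha}(f(X)) \le f(q)$.

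For the reverse inequality, fix any $y < f(q)$; it suffices to show $F_{f(X)}(y) < \alpha$, for then $\{z : F_{f(X)}(z) \ge \alpha\} \subseteq [f(q),\infty)$ and therefore $VaR_{\alpha}(f(X)) \ge f(q)$. Here left-continuity of $f$ enters: since $\lim_{x\uparrow q} f(x) = f(q) > y$, there is an $x_{0} < q$ with $f(x_{0}) > y$. Monotonicity of $f$ then forces $\{f(X) \le y\} \subseteq \{X < x_{0}\}$, so that $F_{f(X)}(y) \le F_{X}(x_{0}) < \alpha$ by the second fact above. Combining the two inequalities gives $VaR_{\alpha}(f(X)) = f(q) = f\big(VaR_{\alpha}(X)\big)$.

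The step I would be most careful about is the reverse inequality. The naive bound $\mathbb{P}(f(X) \le y) \le \mathbb{P}(X < q) \le F_{X}(q)$ is too weak, because $F_{X}(q)$ may equal or exceed $\alpha$; left-continuity of $f$ is precisely what allows one to "step back" from $q$ to a point $x_{0} < q$ and recover the strict inequality $F_{X}(x_{0}) < \alpha$. Everything else is routine, and no continuity assumption on $X$ is in fact required.
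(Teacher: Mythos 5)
Your proof is correct and complete; the paper itself gives no argument for this lemma, merely asserting that it "can be directly derived from the definition of Value at Risk," and your two-inequality derivation is exactly that direct derivation, with the left-continuity of $f$ used in the right place (to step back from $q$ to some $x_{0}<q$ with $f(x_{0})>y$ and invoke $F_{X}(x_{0})<\alpha$). Your closing observation that no continuity of $X$ is needed is also accurate.
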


In what follows we will study some properties of risk measures $\rho: L^{1}%
\to\mathbb{R}\cup\{\infty\}$, like

\begin{itemize}
\item[(i)] \emph{law-invariance:} $\rho(X)$ depends only on the distribution
$F_{X}$.
\item[(ii)] \emph{constancy:} $\rho(m)=m$ for all $m\in\mathbb{R}_{+}$.
\item[(iii)] \emph{monotonicity:} If $X\le Y$ then $\rho(X)\le\rho(Y)$.
\item[(iv)] \emph{translation invariance:} For $m\in\mathbb{R}$ it holds
$\rho(X+m)=\rho(X)+m$.
\item[(v)] \emph{positive homogeneity:} For $\lambda\ge0$ it holds that
$\rho(\lambda X)=\lambda\rho(X).$
\item[(vi)] \emph{subadditivity:} $\rho(X+Y)\le\rho(X)+\rho(Y)$.
\item[(vii)] \emph{convexity:} For $\lambda\in[0,1]$ it holds that
$\rho(\lambda X+(1-\lambda)Y)\le\lambda\rho(X)+(1-\lambda)\rho(Y).$
\end{itemize}

A risk measure with the properties (iii)-(vi) is called \emph{coherent}. Note
that $CTE_{\alpha}(X)$ is not necessarily coherent when $X$ is a
discrete random variable, but is coherent if $X$ is continuous. Also note that
if $\rho$ is positive homogeneous, then convexity and subadditivity are
equivalent properties. Next we need the notion of the usual stochastic
ordering (see e.g. \citet{ms02}).

\begin{definition}
Let $X,Y$ be two random variables. Then $X$ is less than $Y$ in \emph{usual
stochastic order} ($X \leq_{st}Y$) if $\mathbb{E}[f(X)] \leq\mathbb{E}[f(Y)]$
for all increasing $f:\mathbb{R}\to\mathbb{R}$, whenever the expectations
exist. This is equivalent to $F_{X}(t) \ge F_{Y}(t)$ for all $t\in\mathbb{R}$.
\end{definition}

Finally we also have to deal with comonotone random variables (see e.g.\ Definition 1.9.1 in \citet{DDGK});

\begin{definition}
Two random variables $X,Y$ are called \emph{comonotone} if there exists a
random variable $Z$ and increasing functions $f,g:\mathbb{R}\to\mathbb{R}$
such that $X=f(Z)$ and $Y=g(Z)$. The pair is called \emph{countermonotone} if one of the two functions is increasing, the other decreasing.
\end{definition}

\section{Tail Quasi-Linear Means}
For continuous random variables $X\in \mathcal{X}$ and levels $\alpha\in(0,1)$ let us introduce risk measures of the following form:

\begin{definition}
Let $X\in \mathcal{X}$, $\alpha\in (0,1)$ and $U$ an increasing, continuous  function. The \emph{Tail Quasi-Linear Mean}  is defined by
\begin{equation}
\rho_{U}^{\alpha}(X):=U^{-1}\Big(\mathbb{E}\big[U(X)|X\geq VaR_{\alpha
}(X)\big]\Big)
\end{equation}
whenever the conditional expectation inside exists and is finite.
\end{definition}

\begin{remark}
\begin{itemize}
\item[a)]  It is easy to see that $U(x)=x$ leads to $CTE_{\alpha}(X)$.
\item[b)] The Quasi-Linear Mean $\psi_U(X)$ is obtained by taking $\lim_{\alpha\downarrow 0} \rho_U^\alpha(X).$
\end{itemize}
\end{remark}

 In what follows we will first give some alternative representations of the TQLM. By definition of the conditional distribution it follows immediately that we can write
\[
\rho_{U}^{\alpha}(X)= U^{-1}\left(  \frac{\mathbb{E}\big[U(X) 1_{\{X\geq
VaR_{\alpha}(X)\}} \big]}{\mathbb{P}(X\ge VaR_{\alpha}(X))}\right)
\]
where $\mathbb{P}(X\ge VaR_{\alpha}(X))=1-\alpha$ for  continuous $X$.
Moreover, when we denote by $\tilde{\mathbb{P}}(\cdot)= \mathbb{P}(\cdot|
X\geq VaR_{\alpha}(X))$ the conditional probability given $X\geq VaR_{\alpha
}(X)$, then we obtain
\begin{equation}
\label{eq:condexp}\rho_{U}^{\alpha}(X)=U^{-1}\Big(\tilde{\mathbb{E}%
}\big[U(X)\big]\Big).
\end{equation}
Thus, $\rho_{U}^{\alpha}(X)$ is just the Quasi-Linear Mean of $X$ with respect to the
conditional distribution. In order to get an idea what the TQLM measures, suppose that $U$ is sufficiently differentiable. Then we get by a Taylor series approximation (see e.g. \citet{bp}) that
\begin{equation}\label{eq:TQLMapprox}
\rho_U^\alpha(X) \approx CTE_\alpha(X)-\frac12 \ell_U(CTE_\alpha(X)) TV_\alpha(X)
\end{equation}
with $\ell_U(x) = -\frac{U''(x)}{U'(x)}$ being the Arrow-Pratt function of absolute risk aversion  and 
\begin{equation}\label{eq:tailvar}
TV_{\alpha}\left(  X\right)  :=Var\left(  X|X\geq VaR_{\alpha}\left(X\right)  \right)  = 
\Eop[(X-CTE_{\alpha}(  X))^2 | X> VaR_\alpha(X)] 
\end{equation}
 being the tail variance of $X.$ If $U$ is concave $\ell_U \ge 0$ and $TV_\alpha$ is subtracted from $CTE_\alpha$, if $U$ is convex $\ell_U \le 0$ and $TV_\alpha$ is added, penalizing deviations in the tail. In this sense $\rho_U^\alpha(X) $ is approximately a Lagrange-function of a restricted optimization problem where we want to optimize the Conditional Tail Expectation under the restriction that the tail variance is not too high. 

The following technical assumption will be useful:

\begin{description}
\item[(A)] There exists an $\varepsilon>0$ such that $U$ is strictly increasing
on $( VaR_{\alpha}(X) -\varepsilon,VaR_{\alpha}(X) )$.
\end{description}

Obviously assumption (A) is satisfied if $U$ is strictly increasing on its domain which should be satisfied in all reasonable applications. Economically (A) states that at least shortly before the critical level $VaR_{\alpha}(X)$ our measure  strictly penalizes higher outcomes of $X$.  Under assumption (A) we obtain another representation of the TQLM.

\begin{lemma}
For all $X\in \mathcal{X}$, increasing continuous functions $U$ and $\alpha\in(0,1)$ such that (A) is satisfied  we have that
\[
\rho_{U}^{\alpha}(X)=U^{-1}\Big( CTE_{\alpha}(U(X))\Big).
\]
\end{lemma}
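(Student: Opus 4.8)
The plan is to reduce the claimed identity to an equality of two conditioning events. By definition $\rho_{U}^{\alpha}(X)=U^{-1}\big(\mathbb{E}[U(X)\mid X\geq VaR_{\alpha}(X)]\big)$, while $CTE_{\alpha}(U(X))=\mathbb{E}[U(X)\mid U(X)\geq VaR_{\alpha}(U(X))]$. Hence it suffices to show that
\[
\mathbb{E}\big[U(X)\mid X\geq VaR_{\alpha}(X)\big]=\mathbb{E}\big[U(X)\mid U(X)\geq VaR_{\alpha}(U(X))\big],
\]
after which applying $U^{-1}$ to both sides gives the assertion. I would prove this by showing that the events $A:=\{X\geq VaR_{\alpha}(X)\}$ and $B:=\{U(X)\geq VaR_{\alpha}(U(X))\}$ coincide.

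The first step is to rewrite $B$. Since $U$ is continuous it is in particular increasing and left-continuous, so Lemma \ref{lem:VarU} applies with $f=U$ and yields $VaR_{\alpha}(U(X))=U(VaR_{\alpha}(X))$. Thus $B=\{U(X)\geq U(VaR_{\alpha}(X))\}$, and because $U$ is increasing the inclusion $A\subseteq B$ is immediate.

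The only genuine work is the reverse inclusion $B\subseteq A$, and this is exactly where assumption (A) is used; I expect this case distinction to be the main (though modest) obstacle. Take $\omega$ with $X(\omega)<VaR_{\alpha}(X)$; I claim $U(X(\omega))<U(VaR_{\alpha}(X))$, so $\omega\notin B$. If $X(\omega)\in(VaR_{\alpha}(X)-\varepsilon,VaR_{\alpha}(X))$, strict monotonicity of $U$ on this interval gives the claim directly. If instead $X(\omega)\leq VaR_{\alpha}(X)-\varepsilon$, choose any $y\in(VaR_{\alpha}(X)-\varepsilon,VaR_{\alpha}(X))$; then monotonicity gives $U(X(\omega))\leq U(VaR_{\alpha}(X)-\varepsilon)\leq U(y)$ while strict monotonicity on the interval gives $U(y)<U(VaR_{\alpha}(X))$, so again $U(X(\omega))<U(VaR_{\alpha}(X))$. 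In both cases $\omega\notin B$, which proves $B\subseteq A$, hence $A=B$.

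With $A=B$ the two conditional expectations above are literally the same quantity; in particular the right-hand side $CTE_{\alpha}(U(X))$ is finite whenever $\rho_{U}^{\alpha}(X)$ is defined. Applying $U^{-1}$ then yields $\rho_{U}^{\alpha}(X)=U^{-1}\big(CTE_{\alpha}(U(X))\big)$. The remaining details — measurability of $A$ and $B$, and that $\mathbb{P}(A)=1-\alpha>0$ so the conditioning is legitimate — are routine and follow from continuity of $X$ and the definition of $VaR_{\alpha}$.
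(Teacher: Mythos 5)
Your proposal is correct and follows essentially the same route as the paper: both reduce the identity to the equality of the events $\{X\geq VaR_{\alpha}(X)\}$ and $\{U(X)\geq VaR_{\alpha}(U(X))\}$, obtain the easy inclusion from Lemma \ref{lem:VarU} and monotonicity, and invoke assumption (A) only for the reverse inclusion. The sole (cosmetic) difference is that you establish the reverse inclusion by a direct contrapositive case analysis on where $X(\omega)$ falls relative to the interval in (A), whereas the paper gets it in one line from the generalized-inverse identities $U^{-1}\circ U(X)\le X$ and $U^{-1}\circ U(VaR_{\alpha}(X))=VaR_{\alpha}(X)$ of Lemma \ref{lem:gi}.
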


\begin{proof}
We first show that under (A) we obtain:
\[
\{ X\ge VaR_{\alpha}(X)\} = \{ U(X)\ge VaR_{\alpha}(U(X))\}.
\]
Due to Lemma \ref{lem:VarU} we immediately obtain
\[
\{ X\ge VaR_{\alpha}(X)\} \subset\{ U(X)\ge U(VaR_{\alpha}(X))\}=\{ U(X)\ge
VaR_{\alpha}(U(X))\}.
\]
On the other hand we have with Lemma \ref{lem:gi} b),c) that
\[
U(X)\ge VaR_{\alpha}(U(X)) \Rightarrow X\ge U^{-1} \circ U(X)\ge U^{-1} \circ
U(VaR_{\alpha}(X))=VaR_{\alpha}(X)
\]
which implies that both sets are equal.

Thus, we get that
\[
\mathbb{E}\big[U(X)|X\geq VaR_{\alpha}(X)\big] = \mathbb{E}\big[U(X)|U(X)\geq
VaR_{\alpha}(U(X))\big] = CTE_{\alpha}(U(X))
\]
which implies the statement.
\end{proof}

Next we provide some simple yet fundamental features of the TQLM. The first one is rather obvious and we skip the proof.

\begin{lemma}
For any $ X \in \mathcal{X}$, the TQLM and the Quasi-Linear Mean $\psi_{U}$ are related as follows:
\[
\rho_{U}^{\alpha}(X) \ge\psi_{U}(X).
\]
\end{lemma}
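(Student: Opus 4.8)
The plan is to strip off the outer generalized inverse and reduce the statement to a comparison of a conditional and an unconditional expectation of the increasing transform $U(X)$. Since $U^{-1}$ is strictly increasing by Lemma~\ref{lem:gi}~a), in order to prove $\rho_U^\alpha(X)\ge\psi_U(X)$ it suffices to show
\[
\tilde{\mathbb{E}}\big[U(X)\big]\;\ge\;\mathbb{E}\big[U(X)\big],
\]
where, as in \eqref{eq:condexp}, $\tilde{\mathbb{P}}(\cdot)=\mathbb{P}(\cdot\mid X\ge VaR_\alpha(X))$ and $\tilde{\mathbb{E}}$ is the associated expectation. I would establish this by showing that the conditional law of $X$ under $\tilde{\mathbb{P}}$ dominates the unconditional law of $X$ in the usual stochastic order, and then invoking the defining property of $\le_{st}$ applied to the increasing function $U$.

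For the stochastic dominance, set $q:=VaR_\alpha(X)$ and compare survival functions. Because $X$ is continuous we have $\mathbb{P}(X\ge q)=1-\alpha$, so for $t\ge q$ one gets $\tilde{\mathbb{P}}(X>t)=\mathbb{P}(X>t)/(1-\alpha)\ge\mathbb{P}(X>t)$, whereas for $t<q$ one trivially has $\tilde{\mathbb{P}}(X>t)=1\ge\mathbb{P}(X>t)$. Hence the conditional distribution function never exceeds $F_X$, i.e. $X$ under $\tilde{\mathbb{P}}$ is larger than $X$ under $\mathbb{P}$ in usual stochastic order. Applying this to the increasing function $U$ yields $\tilde{\mathbb{E}}[U(X)]\ge\mathbb{E}[U(X)]$ (both sides finite under the standing assumption that makes $\rho_U^\alpha(X)$ and $\psi_U(X)$ well defined), and applying the increasing map $U^{-1}$ to both sides gives $\rho_U^\alpha(X)=U^{-1}\big(\tilde{\mathbb{E}}[U(X)]\big)\ge U^{-1}\big(\mathbb{E}[U(X)]\big)=\psi_U(X)$.

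I do not expect a genuine obstacle here: the core is a one-line stochastic-ordering observation. The only points requiring a little care are that continuity of $X$ is exactly what forces the normalising constant $\mathbb{P}(X\ge q)$ to equal $1-\alpha\le 1$, so that the ratio of tail survival probabilities is $\ge 1$; and that the dominance check should be split explicitly into the cases $t\ge q$ and $t<q$. Monotonicity of $U$ and of $U^{-1}$ then closes the argument.
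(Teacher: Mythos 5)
Your proof is correct. Note that the paper itself gives no argument here --- it declares the lemma ``rather obvious'' and skips the proof --- so there is nothing to compare against; your write-up simply supplies the missing details. The stochastic-order route is sound: the survival-function comparison $\tilde{\mathbb{P}}(X>t)\ge \mathbb{P}(X>t)$ in both cases $t\ge q$ and $t<q$ is exactly the characterisation of $\le_{st}$ recalled in the paper's preliminaries, the normalisation $\mathbb{P}(X\ge VaR_\alpha(X))=1-\alpha$ for continuous $X$ is stated explicitly in Section~3, and monotonicity (strictness is not even needed) of $U^{-1}$ from Lemma~\ref{lem:gi}~a) closes the argument. An even shorter alternative, for what it is worth, is to write $\mathbb{E}[U(X)]$ as the convex combination $(1-\alpha)\,\mathbb{E}[U(X)\mid X\ge q]+\alpha\,\mathbb{E}[U(X)\mid X<q]$ and observe that the second conditional expectation is bounded above by the first; but your version is equally valid and uses only machinery the paper has already set up.
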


The TQLM interpolates between the Value at Risk and the Conditional Tail Expectation in case $U$ is concave. We will show this in the next theorem under our
assumption (A) (see also Figure \ref{fig:obj}):

\begin{theorem}
For $X\in \mathcal{X}$ and concave increasing functions $U$ and
$\alpha\in(0,1)$  such that (A) is satisfied we have that
\[
VaR_{\alpha}(X) \le\rho_{U}^{\alpha}(X) \le CTE_{\alpha}(X).
\]
Moreover, there exist utility functions such that the bounds are attained. In
case $U$ is convex and satisfies (A) and all expectations exist, we obtain
\[
\rho_{U}^{\alpha}(X) \ge CTE_{\alpha}(X).
\]
\end{theorem}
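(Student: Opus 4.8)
The plan is to run everything through the conditional-distribution representation \eqref{eq:condexp}, $\rho_{U}^{\alpha}(X)=U^{-1}\big(\tilde{\mathbb{E}}[U(X)]\big)$, where $\tilde{\mathbb{E}}[\,\cdot\,]=\mathbb{E}[\,\cdot\,|\,X\ge VaR_{\alpha}(X)]$ and in particular $\tilde{\mathbb{E}}[X]=CTE_{\alpha}(X)$. Since $U^{-1}$ is increasing by Lemma~\ref{lem:gi}~a), both bounds reduce to two ingredients: sandwiching $\tilde{\mathbb{E}}[U(X)]$ between $U\big(VaR_{\alpha}(X)\big)$ and $U\big(CTE_{\alpha}(X)\big)$ in the appropriate direction, and then collapsing the composition $U^{-1}\circ U$ at the points $VaR_{\alpha}(X)$ and $CTE_{\alpha}(X)$ via Lemma~\ref{lem:gi}~b),c).

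I would first treat the lower bound, which in fact uses only that $U$ is increasing (not concave): on the event $\{X\ge VaR_{\alpha}(X)\}$ we have $U(X)\ge U\big(VaR_{\alpha}(X)\big)$, hence $\tilde{\mathbb{E}}[U(X)]\ge U\big(VaR_{\alpha}(X)\big)$, and applying $U^{-1}$ gives $\rho_{U}^{\alpha}(X)\ge U^{-1}\circ U\big(VaR_{\alpha}(X)\big)$, which equals $VaR_{\alpha}(X)$ by assumption~(A) together with Lemma~\ref{lem:gi}~c). For the upper bound, concavity of $U$ lets me apply Jensen's inequality under $\tilde{\mathbb{P}}$ (all occurring moments being finite), so $\tilde{\mathbb{E}}[U(X)]\le U\big(\tilde{\mathbb{E}}[X]\big)=U\big(CTE_{\alpha}(X)\big)$; applying the increasing map $U^{-1}$ and then Lemma~\ref{lem:gi}~b), which says $U^{-1}\circ U(x)\le x$, yields $\rho_{U}^{\alpha}(X)\le CTE_{\alpha}(X)$. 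Note that here the inequality in Lemma~\ref{lem:gi}~b) happens to point exactly the way we need, so no hypothesis on $U$ near $CTE_{\alpha}(X)$ is required in the concave case.

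For the attainment, the upper bound is hit by the (concave, increasing, continuous) choice $U(x)=x$, for which $\rho_{U}^{\alpha}(X)=CTE_{\alpha}(X)$ straight from the definition. The lower bound is hit by $U(x)=\min\{x,VaR_{\alpha}(X)\}$, which is concave, increasing, continuous and strictly increasing on $\big(VaR_{\alpha}(X)-\varepsilon,VaR_{\alpha}(X)\big)$, so (A) holds: on $\{X\ge VaR_{\alpha}(X)\}$ this $U$ is identically equal to $VaR_{\alpha}(X)$, whence $\tilde{\mathbb{E}}[U(X)]=VaR_{\alpha}(X)$ and $\rho_{U}^{\alpha}(X)=U^{-1}\big(VaR_{\alpha}(X)\big)=VaR_{\alpha}(X)$, again by (A) and Lemma~\ref{lem:gi}~c).

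Finally, for convex $U$ Jensen's inequality reverses, $\tilde{\mathbb{E}}[U(X)]\ge U\big(CTE_{\alpha}(X)\big)$, and applying $U^{-1}$ gives $\rho_{U}^{\alpha}(X)\ge U^{-1}\circ U\big(CTE_{\alpha}(X)\big)$. This is the step I expect to be the main obstacle: Lemma~\ref{lem:gi}~b) now points the wrong way, so I instead have to establish the \emph{equality} $U^{-1}\circ U\big(CTE_{\alpha}(X)\big)=CTE_{\alpha}(X)$, i.e.\ that $U$ is strictly increasing immediately to the left of $CTE_{\alpha}(X)$. This is exactly where convexity is used: by (A) the difference quotients of $U$ are strictly positive at some point just below $VaR_{\alpha}(X)$, and convexity makes the difference quotients non-decreasing, so $U$ is strictly increasing on all of $\big(VaR_{\alpha}(X)-\varepsilon,\infty\big)$; since $CTE_{\alpha}(X)\ge VaR_{\alpha}(X)$ for continuous $X$, this set contains a left neighbourhood of $CTE_{\alpha}(X)$, and Lemma~\ref{lem:gi}~c) then gives the equality, whence $\rho_{U}^{\alpha}(X)\ge CTE_{\alpha}(X)$. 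Making this propagation of strict monotonicity precise — together with the trivial remark that if $CTE_{\alpha}(X)$ happened to coincide with $VaR_{\alpha}(X)$ the equality would follow from (A) alone — is the only delicate point; the rest is routine bookkeeping with Lemma~\ref{lem:gi}.
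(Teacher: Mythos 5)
Your proof is correct and follows essentially the same route as the paper: Jensen's inequality under the conditional law $\tilde{\mathbb{P}}$ for the upper bound, monotonicity of $U$ on the conditioning event plus Lemma~\ref{lem:gi}~c) for the lower bound, and the identity and capped-at-$VaR_{\alpha}(X)$ utilities for attainment. The one place you go beyond the paper is the convex case: the paper merely remarks that the Jensen inequality reverses, leaving the step $U^{-1}\circ U\big(CTE_{\alpha}(X)\big)=CTE_{\alpha}(X)$ implicit (and Lemma~\ref{lem:gi}~b) alone points the wrong way there), whereas your observation that (A) together with convexity propagates strict monotonicity of $U$ to the right of $VaR_{\alpha}(X)$ supplies exactly the justification that is needed.
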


\begin{proof}
Let $U$ be concave. We will first prove the upper bound. Here we use the
representation of $\rho_{U}^{\alpha}(X)$ in \eqref{eq:condexp} as a Certainty
Equivalent of the conditional distribution $\tilde{\mathbb{P}}$. We obtain
with the Jensen inequality
\begin{equation}
\label{eq:Jensen}\tilde{\mathbb{E}}[U(X)] \le U(\tilde{\mathbb{E}}[X]) =
U(CTE_{\alpha}(X)).
\end{equation}
Taking the generalized inverse of $U$ on both sides and using Lemma
\ref{lem:gi} a), b) yields
\[
\rho_{U}^{\alpha}(X) \le U^{-1}\circ U(CTE_{\alpha}(X))\le CTE_{\alpha}(X).
\]
The choice $U(x)=x$ leads to $\rho_{U}^{\alpha}(X) =CTE_{\alpha}(X).$

For the lower bound first note that
\[
U(VaR_{\alpha}(X))\leq\mathbb{E}\big[U(X)|X\geq VaR_{\alpha}(X)\big].
\]
Taking the generalized inverse of $U$ on both sides and using Lemma
\ref{lem:gi} c) yields
\[
VaR_{\alpha}(X)=U^{-1}\circ U(VaR_{\alpha}(X))\leq\rho_{U}^{\alpha}(X).
\]
Defining
\[
U(x)=\left\{
\begin{array}
[c]{cl}%
x, & x\leq VaR_{\alpha}(X)\\
VaR_{\alpha}(X), & x>VaR_{\alpha}(X)
\end{array}
\right.
\]
yields
\[
U^{-1}(x)=\left\{
\begin{array}
[c]{cl}%
x, & x\leq VaR_{\alpha}(X)\\
\infty, & x>VaR_{\alpha}(X)
\end{array}
\right.
\]
and we obtain
\[
\mathbb{E}\big[U(X)|X\geq VaR_{\alpha}(X)\big]=U(VaR_{\alpha}(X)).
\]
Taking the generalized inverse of $U$ on both sides and using Lemma
\ref{lem:gi} c) yields
\[
\rho_{U}^{\alpha}(X)=U^{-1}\circ U(VaR_{\alpha}(X))=VaR_{\alpha}(X)
\]
which shows that the lower bound can be attained. If $U$ is convex, the
inequality in \eqref{eq:Jensen} reverses.
\end{proof}

\begin{figure}[h]
\centering
\includegraphics[width=0.95\textwidth]{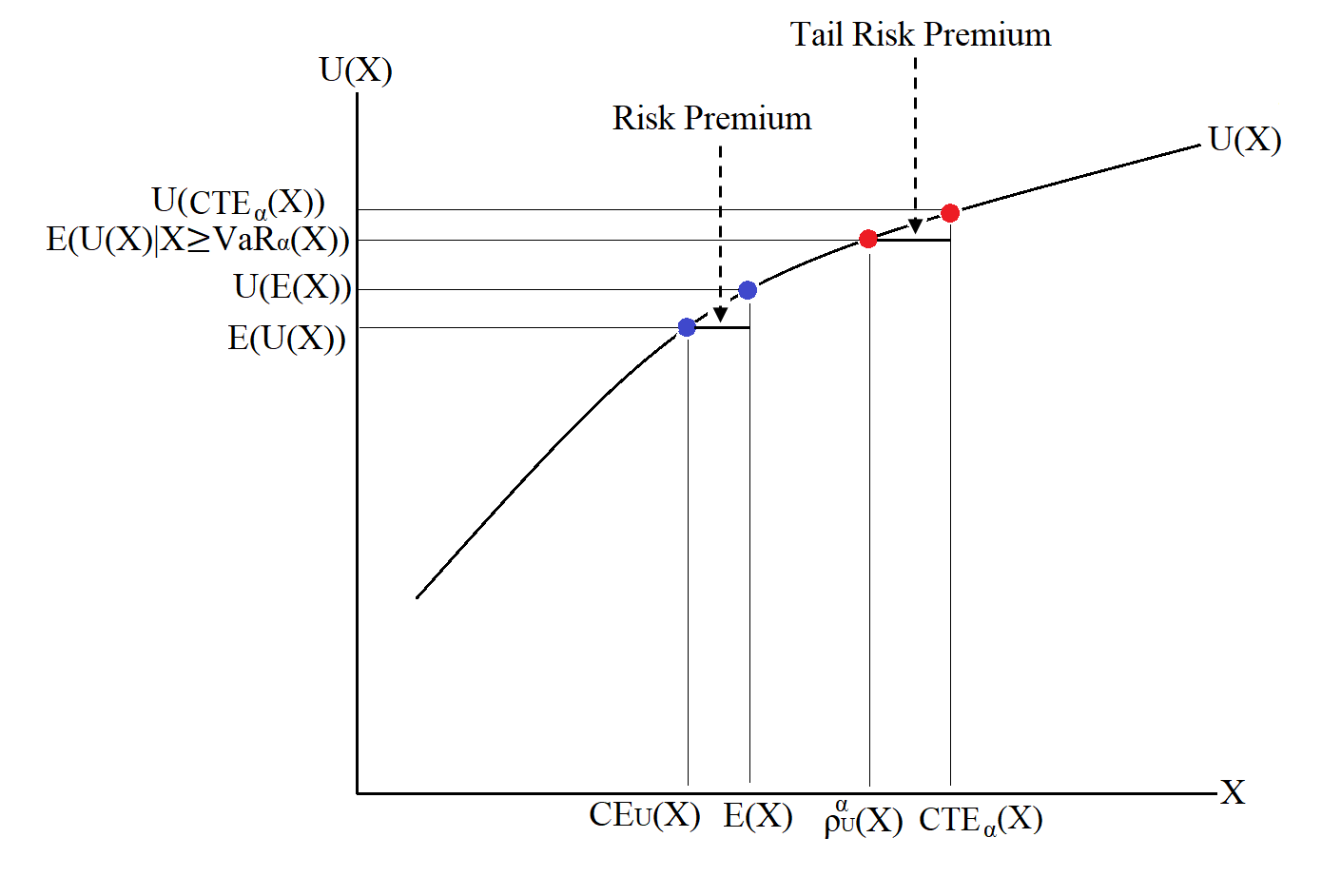} \caption{Relation between
the TQLM, the CTE, the Certainty Equivalent and the expectation in case the
utility function $U$ is concave.}%
\label{fig:obj}%
\end{figure}

Next we discuss properties of the TQLM. Of course when we choose $U$ in a specific way we expect more properties to hold. 

\begin{theorem}
\label{theo:TCCEprop} The TQLM $\rho_{U}^{\alpha}$ has the following properties:
\begin{itemize}
\item[a)] It is law-invariant.
\item[b)] It has the constancy property.
\item[c)] It is monotone.
\item[d)] It is translation-invariant within the class of functions
which are strictly increasing if and only if $U(x)=-e^{-\gamma x},\gamma
>0,$\ or if $U$ is linear.
\item[e)] It is positive homogeneous within the class of functions
which are strictly increasing if and only $U(x)=\frac{1}{\gamma}x^{\gamma
},x>0,\gamma\neq0$ or $U(x)=\ln(x)$ or $U$ is linear.
\end{itemize}
\end{theorem}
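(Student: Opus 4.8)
The plan is to dispose of parts (a)--(c) quickly via a quantile representation of the TQLM, and to treat (d) and (e) together by reducing them to the classical characterizations of translative and homogeneous quasi-arithmetic means.

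First I would record the representation
\[
\rho_{U}^{\alpha}(X)=U^{-1}\!\left(\frac{1}{1-\alpha}\int_{\alpha}^{1}U\big(VaR_{v}(X)\big)\,dv\right),
\]
which follows from \eqref{eq:condexp}: for continuous $X$ the conditional law $\tilde{\mathbb{P}}(\cdot)=\mathbb{P}(\cdot\mid X\ge VaR_\alpha(X))$ has quantile function $u\mapsto VaR_{\alpha+(1-\alpha)u}(X)$, so $\tilde{\mathbb{E}}[U(X)]=\int_0^1 U\big(VaR_{\alpha+(1-\alpha)u}(X)\big)\,du$, and the substitution $v=\alpha+(1-\alpha)u$ gives the formula. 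From it, (a) is immediate since the right-hand side depends on $X$ only through $v\mapsto VaR_v(X)=F_X^{-1}(v)$. For (c), $X\le Y$ implies $X\le_{st}Y$, hence $VaR_v(X)\le VaR_v(Y)$ for all $v$; since $U$ and (by Lemma \ref{lem:gi} a)) $U^{-1}$ are increasing, monotonicity of the integral yields $\rho_U^\alpha(X)\le\rho_U^\alpha(Y)$. For (b), if $X\equiv m$ then $VaR_\alpha(X)=m$, the conditioning event has probability one, and $\rho_U^\alpha(m)=U^{-1}\circ U(m)=m$ by Lemma \ref{lem:gi} c) (this uses strict monotonicity of $U$, the standing assumption in reasonable applications).

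For (d) and (e) the key remark is that $VaR_\alpha(X+m)=VaR_\alpha(X)+m$ and $VaR_\alpha(\lambda X)=\lambda VaR_\alpha(X)$ for $\lambda>0$, and that the event $\{X\ge VaR_\alpha(X)\}$ is unchanged under $X\mapsto X+m$ and $X\mapsto\lambda X$. Hence, with $\tilde{\mathbb{E}}$ the same conditional expectation as above,
\[
\rho_U^\alpha(X+m)=U^{-1}\big(\tilde{\mathbb{E}}[U(X+m)]\big),\qquad \rho_U^\alpha(\lambda X)=U^{-1}\big(\tilde{\mathbb{E}}[U(\lambda X)]\big),
\]
so translation invariance (resp.\ positive homogeneity) of $\rho_U^\alpha$ is equivalent to the corresponding property of the quasi-linear mean $\psi_U$ evaluated at the conditional law of $X$. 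As $X$ ranges over $\mathcal{X}$ this conditional law ranges over all continuous distributions bounded from below (prescribe the quantile function above level $\alpha$), a class rich enough for the characterization. The ``if'' directions are direct checks: $U$ linear gives $\rho_U^\alpha=CTE_\alpha$, which is translation invariant and positively homogeneous; $U(x)=-e^{-\gamma x}$ gives $\psi_U(Z)=-\frac1\gamma\ln\mathbb{E}[e^{-\gamma Z}]$, which is translation invariant; and $U(x)=\frac1\gamma x^\gamma$ resp.\ $U(x)=\ln x$ give the power mean $(\mathbb{E}[Z^\gamma])^{1/\gamma}$ resp.\ the geometric mean $\exp(\mathbb{E}[\ln Z])$, which are positively homogeneous.

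For the ``only if'' directions I would reduce, by approximating two-point laws by continuous distributions and using continuity of $U$ and $U^{-1}$ together with dominated convergence, to the functional equation
\[
U^{-1}\big(pU(a+m)+(1-p)U(b+m)\big)=U^{-1}\big(pU(a)+(1-p)U(b)\big)+m
\]
for all $a<b$, $p\in(0,1)$, $m\in\mathbb{R}$ (and its multiplicative analogue for (e)). These say precisely that the quasi-arithmetic mean generated by $U$ is translative, resp.\ homogeneous, and the classical theory of quasi-arithmetic means (see e.g.\ \citet{HLP}, and Acz\'el's work on such functional equations) identifies the generators as in the statement. For a self-contained argument one assumes $U$ twice differentiable (regularity can be bootstrapped from the functional equation) and differentiates at $m=0$ (resp.\ $\lambda=1$): one finds that $U(c)=pU(a)+(1-p)U(b)$ forces $U'(c)=pU'(a)+(1-p)U'(b)$ (resp.\ $cU'(c)=p\,aU'(a)+(1-p)\,bU'(b)$), i.e.\ $U'$, resp.\ $x\mapsto xU'(x)$, is an affine function of $U$; this is the ODE $U'=\beta U+\delta$, resp.\ $xU'=\beta U+\delta$, whose strictly increasing solutions, up to a positive affine transformation of $U$ (which leaves $\rho_U^\alpha$ unchanged), are the linear functions and the exponentials for (d), and the linear functions, the powers $\frac1\gamma x^\gamma$, and $\ln x$ for (e). The main obstacle is exactly this last step: making the passage to the functional equation rigorous within the class of continuous random variables, and dealing with the regularity of $U$ so that differentiation is justified — alternatively, invoking the precise form of Acz\'el's characterization.
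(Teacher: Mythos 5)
Your proposal is correct and follows essentially the same route as the paper: parts (a)--(c) rest on the quantile representation of the conditional law (the paper uses the coupling $X=F_X^{-1}(V)$, which is the same device), and parts (d)--(e) on the observation that the conditioning event $\{X\ge VaR_{\alpha}(X)\}$ is invariant under translation and positive scaling, so everything reduces to the classical characterization of translative/homogeneous quasi-linear means. The only difference is that the paper simply cites \citet{M1}, Theorems 2.2 and 2.3, for that characterization, whereas you sketch the functional-equation/ODE argument yourself and correctly flag the richness of the class of attainable conditional laws needed for the ``only if'' direction --- a point the paper leaves implicit.
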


\begin{proof}
\begin{itemize}
\item[a)] The law-invariance follows directly from the definition of $\rho
_{U}^{\alpha}$ and the fact that $VaR_{\alpha}$ is law-invariant.
\item[b)] For $m\in\mathbb{R}$ we have that $VaR_{\alpha}(m)=m$ and thus
$\tilde{\mathbb{P}}=\mathbb{P}$ which implies the statement.
\item[c)] We use here the representation
\[
\rho_{U}^{\alpha}(X)= U^{-1}\left(  \frac{\mathbb{E}\big[U(X) 1_{\{X\geq
VaR_{\alpha}(X)\}} \big]}{1-\alpha}\right).
\]
Thus it suffices to show that the relation  $X\le Y$ implies $\mathbb{E}\big[U(X) 1_{\{X\geq
VaR_{\alpha}(X)\}} \big]\le \mathbb{E}\big[U(Y) 1_{\{Y\geq
VaR_{\alpha}(Y)\}} \big].$   Since we are only interested in the marginal distributions of $X$ and $Y$ we can choose $X=F_{X}^{-1}(V),Y=F_{Y}^{-1}(V)$ with same random variable $V$ which is uniformly distributed on $(0,1)$. We obtain with Lemma \ref{lem:gi}
\begin{align*}
&  X\geq VaR_{\alpha}(X)\Leftrightarrow F_{X}^{-1}(V)\geq VaR_{\alpha}
(F_{X}^{-1}(V))\Leftrightarrow F_{X}^{-1}(V)\geq F_{X}^{-1}\big(VaR_{\alpha
}(V)\big)\\
&  \Leftrightarrow F_{X}^{-1}(V)\geq F_{X}^{-1}\big(\alpha\big)\Leftrightarrow
V\geq\alpha.
\end{align*}
The same holds true for $Y$. Since $X\le Y$ we obtain $F_X^{-1} \le F_Y^{-1}$ and thus
\begin{eqnarray*}
\mathbb{E}\big[U(X) 1_{\{X\geq VaR_{\alpha}(X)\}} \big] &=& \mathbb{E}\big[F_X^{-1}(V) 1_{\{V\geq \alpha\}} \big] \\ &\le&  \mathbb{E}\big[F_Y^{-1}(V) 1_{\{V\geq \alpha\}} \big] = \mathbb{E}\big[U(Y) 1_{\{Y\geq VaR_{\alpha}(Y)\}} \big]
\end{eqnarray*}
which implies the result.
\item[d)] Since we have the representation
\begin{equation}
\rho_{U}^{\alpha}(X)=U^{-1}\Big(\tilde{\mathbb{E}}\big[U(X)\big]\Big),
\end{equation}
this statement follows from \citet{M1}, Theorem 2.2. Note that we can work here
with one fixed conditional distribution since $\{X\ge VaR_{\alpha}(X)\}=\{X+c
\ge VaR_{\alpha}(X+c)\}$ for all $c\in\mathbb{R}$.
\item[e)] As in d) this statement follows from \citet{M1}, Theorem 2.3. Note
that we can work here with one fixed conditional distribution since $\{X\ge
VaR_{\alpha}(X)\}=\{\lambda X \ge VaR_{\alpha}(\lambda X)\}$ for all
$\lambda>0$.
\end{itemize}
\end{proof}

\begin{remark}
The monotonicity property of Theorem \ref{theo:TCCEprop} seems to be obvious,
but it indeed may not hold if $X$ and $Y$ are discrete. 
One has to be cautious in this case (see also the examples given in \citet{bm06}). The same is true for the Conditional Tail Expectation.
\end{remark}

\begin{theorem}
If $\rho_{U}^{\alpha}$ is a coherent risk measure, then it is the Conditional Tail Expectation Measure $\rho_{U}^{\alpha
}(X)=CTE_{\alpha}\left(  X\right)  .$
\end{theorem}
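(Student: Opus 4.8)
The plan is to combine the characterizations already at hand with the two positive results we have proved about $\rho_U^\alpha$. Coherence in particular requires positive homogeneity and translation invariance, so by parts (d) and (e) of Theorem \ref{theo:TCCEprop} the utility function $U$ must simultaneously be (up to affine transformations that do not change $\rho_U^\alpha$) of exponential type $U(x)=-e^{-\gamma x}$ or linear, and of power/log type $U(x)=\frac1\gamma x^\gamma$ or $U(x)=\ln x$ or linear. The only function belonging to both families is the linear one, so coherence forces $U$ to be affine. Since $\rho_U^\alpha$ is invariant under affine rescaling of $U$ (the generalized inverse absorbs the affine map), we may take $U(x)=x$, and then Remark (a) after the definition of the TQLM gives $\rho_U^\alpha(X)=CTE_\alpha(X)$.

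The first step I would carry out is to record that if $\rho_U^\alpha$ is coherent then it is in particular positively homogeneous and translation invariant; here I must be slightly careful about the ``within the class of strictly increasing functions'' qualifier in Theorem \ref{theo:TCCEprop}(d),(e). If $U$ is not strictly increasing one should argue separately — e.g. monotonicity of $\rho_U^\alpha$ together with the non-strict case reduces, on the relevant tail region, to the same dichotomy, or one simply notes that the boundary utility used in the proof of the interpolation theorem (which is constant on the tail) fails translation invariance. The cleanest route is: coherence implies $\rho_U^\alpha(X+m)=\rho_U^\alpha(X)+m$ and $\rho_U^\alpha(\lambda X)=\lambda\rho_U^\alpha(X)$ for all admissible $m,\lambda$, and then invoke both characterizations to intersect the two families of functions.

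The second step is the intersection itself: $U(x)=-e^{-\gamma x}$ is never a positive multiple of $x^\gamma$, $\ln x$, or an affine function (its second derivative has constant sign and never vanishes, whereas power functions with $\gamma\neq 1$ have a second derivative that is unbounded or vanishing at the boundary, and $U'$ behaves differently at infinity), so the exponential family and the power/log family meet only in the common linear member. Hence $U$ is affine, say $U(x)=ax+b$ with $a>0$.

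The final step is to unwind the definition: for affine $U$, $U^{-1}(y)=(y-b)/a$, and
\[
\rho_U^\alpha(X)=U^{-1}\big(\mathbb{E}[aX+b\mid X\ge VaR_\alpha(X)]\big)=\mathbb{E}[X\mid X\ge VaR_\alpha(X)]=CTE_\alpha(X),
\]
which is the claim. Conversely $CTE_\alpha$ is coherent on $\mathcal{X}$ (continuous random variables), so the class is nonempty. The main obstacle I anticipate is purely bookkeeping: making the ``strictly increasing'' side condition in Theorem \ref{theo:TCCEprop}(d),(e) harmless, i.e. checking that a coherent $\rho_U^\alpha$ cannot arise from a $U$ that is only weakly increasing — this should follow because any flat piece of $U$ on the tail of $X$ would make $\rho_U^\alpha$ violate either monotonicity in the strict sense or translation invariance, but it deserves an explicit line.
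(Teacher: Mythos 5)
Your proposal follows essentially the same route as the paper's own (very terse) proof: coherence forces translation invariance and positive homogeneity simultaneously, Theorem \ref{theo:TCCEprop}(d),(e) then forces $U$ to be linear, and linear $U$ gives $CTE_\alpha$. Your additional care about the ``strictly increasing'' qualifier and the explicit intersection of the two utility families are sensible elaborations of the same argument, not a different approach.
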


\begin{proof}
As can be seen from Theorem \ref{theo:TCCEprop}, the translation invariance
and homogeneity properties hold simultaneously if and only $U$ is linear,
which implies that $\rho_{U}^{\alpha}$ is the Conditional Tail Expectation.
\end{proof}

\section{Tail Conditional Entropic Risk Measure}
In case $U(x)=\frac1\gamma e^{\gamma x},\gamma\neq0,$ we obtain a conditional tail version of the Entropic Risk Measure. It is given by
\begin{equation}\label{eq:TCERM}
\rho_{U}^{\alpha}(X)=\frac{1}{\gamma}\log\mathbb{E}[e^{\gamma X}|X\geq
VaR_{\alpha}(X)].
\end{equation}
In this case we write $\rho_{\gamma}^{\alpha}$ instead of $\rho_{U}^{\alpha}$ since $U$ is determined by $\gamma$. For $\alpha\downarrow 0$ we obtain in the limit the classical Entropic Risk Measure.
We call $\rho_{\gamma}^{\alpha}(X)$ {\em Tail Conditional Entropic Risk Measure} and get from \eqref{eq:TQLMapprox} the following approximation of $\rho_{\gamma}^{\alpha}(X)$:
If $\gamma\neq0$ is sufficiently close to zero, the conditional tail version
of the Entropic Risk Measure can be approximated by
\[
\rho_{\gamma}^{\alpha}(X)\approx CTE_{\alpha}\left(  X\right)  -\frac{\gamma
}{2}TV_{\alpha}\left(  X\right).
\]
i.e. it is a weighted measure consisting of Conditional Tail Expectation and Tail Variance (see \eqref{eq:tailvar}).

Another representation of the Tail Conditional Entropic Risk Measure is for $\gamma\neq0$ given by (see e.g. \citet{BR,B1})
\[
\rho_{\gamma}^{\alpha}(X)=\inf_{\mathbb{Q}\ll\tilde{\mathbb{P}}}\left(
\mathbb{E}_{\mathbb{Q}}[X]+\frac{1}{\gamma}\mathbb{E}_{\mathbb{Q}}\left(
\log\frac{d\mathbb{Q}}{d\tilde{\mathbb{P}}}\right)  \right)  .
\]
where $\tilde{\mathbb{P}}$ is again the conditional distribution
$\mathbb{P}(\cdot|X\geq VaR_{\alpha}(X))$. The minimizing ${\mathbb{Q}^{\ast}}$
is attained at
\[
\mathbb{Q}^{\ast}(dz)=\frac{e^{\gamma z}\tilde{\mathbb{P}}(dz)}{\int e^{\gamma
y}\tilde{\mathbb{P}}(dy)}.
\]

According to Theorem \ref{theo:TCCEprop} we cannot expect the Tail Conditional
Entropic Risk Measure to be convex. However we obtain the following result:

\begin{theorem}
\label{theo:TCEconvex} For $\gamma>0$ the Tail Conditional Entropic Risk
Measure is convex for comonotone  random variables.
\end{theorem}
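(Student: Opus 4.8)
The plan is to use the defining feature of comonotonicity to collapse the three TQLM computations onto one and the same conditional distribution, after which the statement reduces to the classical convexity of the Entropic Risk Measure. Let $X,Y\in\mathcal{X}$ be comonotone and fix $\lambda\in(0,1)$; the cases $\lambda\in\{0,1\}$ are trivial. Since $\rho_{\gamma}^{\alpha}$ is law-invariant (Theorem \ref{theo:TCCEprop} a)), I may replace $(X,Y)$ by the comonotone coupling $X=F_X^{-1}(V)$, $Y=F_Y^{-1}(V)$ with a single $V$ uniformly distributed on $(0,1)$, and set $W:=\lambda X+(1-\lambda)Y=h(V)$ with $h:=\lambda F_X^{-1}+(1-\lambda)F_Y^{-1}$. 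As a nonnegative combination of increasing, left-continuous functions, $h$ is increasing and left-continuous, and since $X,Y$ are continuous their quantile functions are strictly increasing, hence so is $h$; in particular $W$ is again a continuous random variable and $\rho_{\gamma}^{\alpha}(W)$ is well defined (I tacitly assume the exponential moments appearing below are finite, as is needed for the TQLM to exist).

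The first — and main — step is to establish the coincidence of the three conditioning events,
\[
\{X\ge VaR_{\alpha}(X)\}=\{Y\ge VaR_{\alpha}(Y)\}=\{W\ge VaR_{\alpha}(W)\}=\{V\ge\alpha\}.
\]
For $X$ and $Y$ this is precisely the chain of equivalences carried out in the proof of Theorem \ref{theo:TCCEprop} c). For $W$, Lemma \ref{lem:VarU} applied to the increasing, left-continuous function $h$ gives $VaR_{\alpha}(W)=h\big(VaR_{\alpha}(V)\big)=h(\alpha)$, and strict monotonicity of $h$ yields $\{h(V)\ge h(\alpha)\}=\{V\ge\alpha\}$. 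Hence the conditional probability $\tilde{\mathbb{P}}:=\mathbb{P}(\cdot\mid V\ge\alpha)$ — a genuine probability measure since $1-\alpha>0$ — is one and the same for $X$, $Y$ and $W$, and by \eqref{eq:TCERM}
\[
\rho_{\gamma}^{\alpha}(W)=\frac{1}{\gamma}\log\tilde{\mathbb{E}}[e^{\gamma W}],\quad
\rho_{\gamma}^{\alpha}(X)=\frac{1}{\gamma}\log\tilde{\mathbb{E}}[e^{\gamma X}],\quad
\rho_{\gamma}^{\alpha}(Y)=\frac{1}{\gamma}\log\tilde{\mathbb{E}}[e^{\gamma Y}].
\]

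It then remains to apply Hölder's inequality with conjugate exponents $1/\lambda$ and $1/(1-\lambda)$ to the factorisation $e^{\gamma W}=(e^{\gamma X})^{\lambda}(e^{\gamma Y})^{1-\lambda}$, which gives $\tilde{\mathbb{E}}[e^{\gamma W}]\le\big(\tilde{\mathbb{E}}[e^{\gamma X}]\big)^{\lambda}\big(\tilde{\mathbb{E}}[e^{\gamma Y}]\big)^{1-\lambda}$; since $\gamma>0$ the map $t\mapsto\frac{1}{\gamma}\log t$ is increasing, so taking $\frac{1}{\gamma}\log$ on both sides yields $\rho_{\gamma}^{\alpha}(W)\le\lambda\rho_{\gamma}^{\alpha}(X)+(1-\lambda)\rho_{\gamma}^{\alpha}(Y)$, which is the assertion. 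The Hölder/logarithm step is entirely routine; I expect the only delicate point to be the first step, namely the identification of the three conditioning events via the generalized-inverse calculus of Lemma \ref{lem:gi} and Lemma \ref{lem:VarU}, and this is exactly where comonotonicity is indispensable — for generally dependent $X,Y$ the three conditional distributions differ and the reduction breaks down. It is also worth recording that the sign condition $\gamma>0$ is essential: for $\gamma<0$ the final monotone step reverses and the inequality is lost.
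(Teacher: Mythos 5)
Your proof is correct, and while it shares the paper's crucial first step, it finishes by a genuinely different route. Like the paper, you reduce everything to the single conditional measure $\tilde{\mathbb{P}}=\mathbb{P}(\cdot\mid V\ge\alpha)$ by writing $X=F_X^{-1}(V)$, $Y=F_Y^{-1}(V)$ and observing that $\lambda F_X^{-1}+(1-\lambda)F_Y^{-1}$ is again increasing and left-continuous, so that all three conditioning events coincide with $\{V\ge\alpha\}$ --- this is exactly the paper's argument, and it is indeed the only place where comonotonicity enters. From there the paper invokes Theorem 6 of Deprez and Gerber: using constancy and translation invariance it suffices to check $g''(0;X,Y)\ge 0$ for $g(t)=\rho_\gamma^\alpha(X+t(Y-X))$, and the second derivative is identified as a variance of $Y-X$ under an exponentially tilted measure, hence nonnegative. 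You instead apply H\"older's inequality with exponents $1/\lambda$ and $1/(1-\lambda)$ to $e^{\gamma(\lambda X+(1-\lambda)Y)}=(e^{\gamma X})^{\lambda}(e^{\gamma Y})^{1-\lambda}$ and take $\tfrac1\gamma\log$, which is the classical proof that the cumulant generating function is convex. Your route is more elementary and self-contained: it avoids the external convexity criterion, the differentiation under the expectation sign, and the restriction to twice-differentiable paths, and it delivers the convexity inequality directly rather than via a second-order condition. What the paper's approach buys in exchange is the explicit variance formula for $g''(0;X,Y)$, which connects back to the tail-variance interpretation in \eqref{eq:TQLMapprox} and makes transparent where positivity of $\gamma$ enters; in your argument the role of $\gamma>0$ only appears in the final monotone step, as you correctly note.
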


\begin{proof}
First note that the Tail Conditional Entropic Risk Measure has the constancy
property and is translation invariant. Thus, using
Theorem 6 in \citet{DG} it is sufficient to show that $g^{\prime\prime
}(0;X,Y)\geq0$ for all comonotone $X,Y$ where
\[
g(t;X,Y)=\rho_{\gamma}^{\alpha}(X+t(Y-X)),\quad t\in(0,1).
\]
Since $X$ and $Y$ are comonotone we can write them as $X=F_{X}^{-1}
(V),Y=F_{Y}^{-1}(V)$ with same random variable $V$ which is uniformly
distributed on $(0,1)$. Thus we get with Lemma \ref{lem:gi} (compare also the proof of Theorem \ref{theo:TCCEprop} c))
\begin{align*}
&  X\geq VaR_{\alpha}(X)\Leftrightarrow F_{X}^{-1}(V)\geq VaR_{\alpha}
(F_{X}^{-1}(V))\Leftrightarrow F_{X}^{-1}(V)\geq F_{X}^{-1}\big(VaR_{\alpha
}(V)\big)\\
&  \Leftrightarrow F_{X}^{-1}(V)\geq F_{X}^{-1}\big(\alpha\big)\Leftrightarrow
V\geq\alpha.
\end{align*}
The same holds true for $Y$ and also for $X+t(Y-X)=(1-t)X+tY=(1-t)F_{X}
^{-1}(V)+tF_{Y}^{-1}(V)$ since it is an increasing, left-continuous function of $V$ for
$t\in(0,1)$. Thus all events on which we condition here are the same:
\[
\{X\geq VaR_{\alpha}(X)\}\!=\!\{Y\geq VaR_{\alpha}(Y)\}\!=\!\{X+t(Y-X)\geq
VaR_{\alpha}(X+t(Y-X))\}\!=\!\{V\geq\alpha\}.
\]
Hence we obtain
\[
g^{\prime}(t;X,Y)=\frac{\mathbb{E}\left[  (Y-X)e^{\gamma(X+t(Y-X))}
1_{[V\geq\alpha]}\right]  }{\mathbb{E}\left[  e^{\gamma(X+t(Y-X))}
1_{[V\geq\alpha]}\right]  }
\]
and
\[
g^{\prime\prime}(0;X,Y)=\gamma\left\{  \frac{\mathbb{E}\left[  (Y-X)^{2}
e^{\gamma X}1_{[V\geq\alpha]}\right]  }{\mathbb{E}\left[  e^{\gamma
X}1_{[V\geq\alpha]}\right]  }-\left(  \frac{\mathbb{E}\left[  (Y-X)e^{\gamma
X}1_{[V\geq\alpha]}\right]  }{\mathbb{E}\left[  e^{\gamma X}1_{[V\geq\alpha
]}\right]  }\right)  ^{2}\right\}  .
\]
This expression can be interpreted as the variance of $(Y-X)$ under the
probability measure
\[
\frac{d\mathbb{P}^{\prime}}{d\mathbb{P}}=\frac{e^{\gamma X}1_{[V\geq\alpha]}
}{\mathbb{E}\left[  e^{\gamma X}1_{[V\geq\alpha]}\right]  }
\]
and is thus greater or equal to zero which implies the statement.
\end{proof}

\section{Applications}
In this Section we show that the TQLM is a useful tool for various applications in risk management.

\subsection{Capital Allocation}
Firms often have the problem of allocating a global risk capital requirement down to subportfolios. One way to do this is to use Aumann-Shapley capital allocation rules. For convex risk measures this is not an easy task and has e.g.\  been discussed in \citet{T}.
A desirable property in this respect would be that the sum of the capital requirements for the subportfolios equals the global risk capital requirement. More precisely, let $\left(  X_{1},X_{2},...,X_{n}\right)  $ be a vector of
$n$ random variables  and let  $S=X_{1}+X_{2}+...+X_{n}$ be its sum. An intuitive way to measure the contribution of $X_i$ to the total capital requirement, based on the TQLM is by defining (compare for instance with  \citet{L1}):
\[
\rho_{U}^{\alpha}\left(  X_{i}|S\right)  :=U^{-1}\left(  \mathbb{E}\left[
U\left(  X_{i}\right)  |S\geq VaR_{\alpha}\left(  S\right)  \right] \right).
\]
This results in a capital allocation rule if
\begin{equation}
\rho_{U}^{\alpha}\left(  S\right)  =\sum_{i=1}^{n}\rho_{U}^{\alpha}\left(
X_{i}|S\right)  . \label{r_S1}%
\end{equation}
It is easily shown that this property is only true in a special case:

\begin{theorem}
The TQLM of the aggregated loss $S$ is equal to the sum of TQLM of the risk
sources $X_{i},i=1,2,...,n,$ i.e. \eqref{r_S1} holds for all random variables $X_{i},i=1,2,...,n,$ if and only if $U$ is linear.
\end{theorem}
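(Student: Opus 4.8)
The plan is to prove both directions. The ``if'' direction is immediate: if $U(x)=ax+b$ with $a>0$, then $\rho_U^\alpha(X)=CTE_\alpha(X)=\mathbb{E}[X\mid S\geq VaR_\alpha(S)]$ when conditioning on $\{S\geq VaR_\alpha(S)\}$, and linearity of expectation gives $\mathbb{E}[S\mid S\geq VaR_\alpha(S)]=\sum_{i=1}^n \mathbb{E}[X_i\mid S\geq VaR_\alpha(S)]$, so \eqref{r_S1} holds. The real content is the ``only if'' direction.

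For the ``only if'' direction, the strategy is to exhibit a single (simple) choice of random variables $X_1,\dots,X_n$ for which \eqref{r_S1} forces $U$ to be linear. The cleanest route: take $n=2$ (the general case follows by setting $X_3=\dots=X_n=0$), and choose $X_1, X_2$ such that on the conditioning event $\{S\geq VaR_\alpha(S)\}$ the pair $(X_1,X_2)$ takes just two values with specified probabilities, while $S=X_1+X_2$ is arranged to be, say, constant or degenerate on the tail so that the conditioning event is transparent. One convenient construction: let $Z$ be uniform on $(0,1)$, put $X_1=F_1^{-1}(Z)$, $X_2=F_2^{-1}(Z)$ comonotone so that (by the argument in the proof of Theorem~\ref{theo:TCCEprop}(c)) the event $\{S\geq VaR_\alpha(S)\}$ equals $\{Z\geq\alpha\}$ and simultaneously equals $\{X_i\geq VaR_\alpha(X_i)\}$. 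Then \eqref{r_S1} becomes
\[
U^{-1}\!\Big(\tilde{\mathbb{E}}[U(X_1+X_2)]\Big)=U^{-1}\!\Big(\tilde{\mathbb{E}}[U(X_1)]\Big)+U^{-1}\!\Big(\tilde{\mathbb{E}}[U(X_2)]\Big),
\]
where $\tilde{\mathbb{E}}$ is expectation under $\mathbb{P}(\cdot\mid Z\geq\alpha)$. By choosing the marginals so that $X_1$ and $X_2$ each take only finitely many values on $\{Z\geq\alpha\}$, this turns into a functional equation for $U$ of Cauchy/Jensen type. For instance, taking $X_1, X_2$ each equal to a constant on the tail reproduces the constancy property and gives nothing; but taking $X_1$ two-valued and $X_2$ a constant $c$ on the tail yields $U^{-1}(\tilde{\mathbb{E}}[U(X_1+c)])=U^{-1}(\tilde{\mathbb{E}}[U(X_1)])+c$, i.e.\ the TQLM is translation invariant, and then Theorem~\ref{theo:TCCEprop}(d) forces $U$ exponential or linear. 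To eliminate the exponential case, use a second test: take $X_1, X_2$ both genuinely two-valued and comonotone on the tail and check that the resulting identity fails for $U(x)=-e^{-\gamma x}$ — concretely, $\rho_U^\alpha$ is positive homogeneous only for power/log/linear $U$ by Theorem~\ref{theo:TCCEprop}(e), and the additivity relation \eqref{r_S1}, applied to scaled copies $X_i=\lambda Y_i$, combined with constancy, pins down enough structure to rule out every non-linear candidate. The slickest phrasing is probably: \eqref{r_S1} implies in particular (via appropriate test variables) both translation invariance and a scaling relation, and the only $U$ satisfying both is linear — which is exactly the argument already used in the proof of the coherence theorem above.

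The main obstacle will be choosing test random variables carefully enough that the conditioning event $\{S\geq VaR_\alpha(S)\}$ is completely explicit and coincides with $\{X_i\geq VaR_\alpha(X_i)\}$ for each $i$, so that the various conditional expectations in \eqref{r_S1} are all taken with respect to the \emph{same} conditional law; the comonotone construction $X_i=F_i^{-1}(Z)$ handles this, as in the proofs of Theorems~\ref{theo:TCCEprop}(c) and~\ref{theo:TCEconvex}. Once that is arranged, reducing \eqref{r_S1} to the known characterizations in Theorem~\ref{theo:TCCEprop}(d),(e) is routine. I would therefore structure the write-up as: (1) ``if'' direction in one line; (2) fix $n=2$, introduce comonotone $X_1=F_1^{-1}(Z)$, $X_2=F_2^{-1}(Z)$ and note all conditioning events equal $\{Z\geq\alpha\}$; (3) specialize the marginals to derive that \eqref{r_S1} implies translation invariance of $\rho_U^\alpha$, hence by Theorem~\ref{theo:TCCEprop}(d) that $U$ is exponential or linear; (4) a second specialization rules out the exponential case (e.g.\ because \eqref{r_S1} would then read $-\frac1\gamma\log\tilde{\mathbb{E}}[e^{-\gamma(X_1+X_2)}]=-\frac1\gamma\log\tilde{\mathbb{E}}[e^{-\gamma X_1}]-\frac1\gamma\log\tilde{\mathbb{E}}[e^{-\gamma X_2}]$, i.e.\ $\tilde{\mathbb{E}}[e^{-\gamma(X_1+X_2)}]=\tilde{\mathbb{E}}[e^{-\gamma X_1}]\tilde{\mathbb{E}}[e^{-\gamma X_2}]$, a product/independence identity that fails for comonotone two-valued $X_1,X_2$); (5) conclude $U$ is linear, whence $\rho_U^\alpha=CTE_\alpha$.
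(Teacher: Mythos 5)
Your argument is sound, and it is worth noting that the paper itself offers no proof of this theorem at all (it is dismissed with ``it is easily shown''), so there is nothing to match against; your write-up actually supplies the missing content. The structure you propose works: the ``if'' direction is the one-line computation you give; for ``only if'', taking $X_2=c$ constant turns \eqref{r_S1} into translation invariance $\rho_U^\alpha(X_1+c)=\rho_U^\alpha(X_1)+c$ (since $\{X_1+c\ge VaR_\alpha(X_1+c)\}=\{X_1\ge VaR_\alpha(X_1)\}$ and $\rho_U^\alpha(c|S)=c$), so Theorem~\ref{theo:TCCEprop}(d) leaves only the exponential and linear cases; and the exponential is killed by the multiplicativity identity $\tilde{\Eop}[e^{-\gamma(X_1+X_2)}]=\tilde{\Eop}[e^{-\gamma X_1}]\,\tilde{\Eop}[e^{-\gamma X_2}]$, which fails already for $X_1=X_2=X$ non-degenerate by strict Jensen/Cauchy--Schwarz — a cleaner test case than the two-valued comonotone pair you sketch. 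Two small corrections. First, the comonotone construction and the worry about the conditioning events coinciding with $\{X_i\ge VaR_\alpha(X_i)\}$ are unnecessary: by the very definition of the allocation $\rho_U^\alpha(X_i|S)$, every term in \eqref{r_S1} is already conditioned on the \emph{same} event $\{S\ge VaR_\alpha(S)\}$, so the reduction to a functional equation for $U$ under a single conditional law $\tilde{\Pop}$ is automatic. Second, Theorem~\ref{theo:TCCEprop}(d) is stated only within the class of strictly increasing $U$, so your reduction inherits that restriction; if you want the theorem for general increasing continuous $U$ you need a separate (easy) argument disposing of functions with flat pieces — a gap the paper's own coherence theorem shares.
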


In general we cannot expect \eqref{r_S1} to hold. Indeed for the Tail Conditional Entropic Risk Measure we obtain that in case the losses are comonotone, it is not profitable to split the portfolio in subportfolios, whereas it is profitable if two losses are countermonotone.

\begin{theorem}
The Tail Conditional Entropic Risk Measure has for $\gamma>0$ and comonotone $X_i, i=1,\ldots, n$ the property that
\[
\rho^{\alpha}_{\gamma}(S) \ge\sum_{i=1}^{n} \rho_{\gamma}^{\alpha}(X_{i}|S).
\]
In case $n=2$ and $X_1, X_2$ are countermonotone the inequality reverses.
\end{theorem}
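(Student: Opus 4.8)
The plan is to work with the logarithmic representation of the Tail Conditional Entropic Risk Measure. Set $A:=\{S\geq VaR_{\alpha}(S)\}$ and let $\tilde{\mathbb{P}}:=\mathbb{P}(\cdot\mid A)$ be the associated conditional law. By the very definition of $\rho_{\gamma}^{\alpha}(X_{i}\mid S)$ all quantities condition on the \emph{same} event $A$, so that $\rho_{\gamma}^{\alpha}(S)=\frac1\gamma\log\tilde{\mathbb{E}}[e^{\gamma S}]$ and $\rho_{\gamma}^{\alpha}(X_{i}\mid S)=\frac1\gamma\log\tilde{\mathbb{E}}[e^{\gamma X_{i}}]$. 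Since $\gamma>0$ and $\log$ is strictly increasing, and since $e^{\gamma S}=\prod_{i=1}^{n}e^{\gamma X_{i}}$, the claimed inequality $\rho_{\gamma}^{\alpha}(S)\geq\sum_{i=1}^{n}\rho_{\gamma}^{\alpha}(X_{i}\mid S)$ is equivalent to
\[
\tilde{\mathbb{E}}\Big[\prod_{i=1}^{n}e^{\gamma X_{i}}\Big]\geq\prod_{i=1}^{n}\tilde{\mathbb{E}}\big[e^{\gamma X_{i}}\big],
\]
and, for $n=2$ in the countermonotone case, to the reversed inequality. Here I assume throughout that the relevant conditional exponential moments are finite, as required for the TQLM to be well defined.

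For the first assertion I would use comonotonicity exactly as in the proof of Theorem \ref{theo:TCEconvex}: write $X_{i}=F_{X_{i}}^{-1}(V)$ for a common $V\sim U(0,1)$, so that each $v\mapsto e^{\gamma F_{X_{i}}^{-1}(v)}$ is a non-negative, non-decreasing function of $v$ (this is where $\gamma>0$ enters), and this monotonicity is of course retained after restricting the sample space to $A$, i.e.\ under $\tilde{\mathbb{P}}$. The displayed product inequality is then an instance of Chebyshev's integral (sum) inequality: for any real random variable $V$ and non-negative non-decreasing functions $f_{1},\dots,f_{n}$ one has $\mathbb{E}[\prod_{i}f_{i}(V)]\geq\prod_{i}\mathbb{E}[f_{i}(V)]$. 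I would prove this by induction on $n$, the base case $n=2$ being the standard two-function correlation inequality (apply $\mathbb{E}[(f(V)-f(V'))(g(V)-g(V'))]\geq0$ to i.i.d.\ copies $V,V'$), and the inductive step using that $\prod_{i=1}^{n-1}f_{i}$ is again non-negative non-decreasing. Applying this under $\tilde{\mathbb{P}}$ with $f_{i}=e^{\gamma\,\cdot}\circ F_{X_{i}}^{-1}$ finishes this part.

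For $n=2$ with countermonotone $X_{1},X_{2}$, write $X_{1}=F_{X_{1}}^{-1}(V)$ and $X_{2}=F_{X_{2}}^{-1}(1-V)$, so that $e^{\gamma X_{1}}$ is non-decreasing and $e^{\gamma X_{2}}$ is non-increasing in $V$ (again also under $\tilde{\mathbb{P}}$). The two-function Chebyshev inequality for oppositely monotone functions gives $\tilde{\mathbb{E}}[e^{\gamma X_{1}}e^{\gamma X_{2}}]\leq\tilde{\mathbb{E}}[e^{\gamma X_{1}}]\,\tilde{\mathbb{E}}[e^{\gamma X_{2}}]$, i.e.\ $\tilde{\mathbb{E}}[e^{\gamma S}]\leq\prod_{i=1}^{2}\tilde{\mathbb{E}}[e^{\gamma X_{i}}]$, whence $\rho_{\gamma}^{\alpha}(S)\leq\rho_{\gamma}^{\alpha}(X_{1}\mid S)+\rho_{\gamma}^{\alpha}(X_{2}\mid S)$.

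The only genuinely delicate point is the bookkeeping of the conditioning event: one must notice that $\rho_{\gamma}^{\alpha}(S)$ and every $\rho_{\gamma}^{\alpha}(X_{i}\mid S)$ condition on the one fixed set $A=\{S\geq VaR_{\alpha}(S)\}$ — so, in contrast to the earlier proofs, no identification of the events $\{X_{i}\geq VaR_{\alpha}(X_{i})\}$ is required — and that passing from $\mathbb{P}$ to $\tilde{\mathbb{P}}=\mathbb{P}(\cdot\mid A)$ preserves monotonicity of a function of $V$ in $V$, so that the correlation inequality still applies under the conditional law. The restriction to $n=2$ in the countermonotone statement is essential, since three or more random variables cannot be pairwise countermonotone; beyond these observations, everything reduces to the classical correlation inequality for monotone transforms of a single random variable.
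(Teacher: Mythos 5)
Your proposal is correct and follows essentially the same route as the paper: represent the comonotone $X_i$ as $F_{X_i}^{-1}(V)$ for a common uniform $V$, reduce the claim to the Chebyshev correlation inequality for increasing functions of $V$ under the conditional law $\tilde{\mathbb{P}}$, induct on $n$, and reverse the inequality in the countermonotone case. The only (harmless) difference is that you condition directly on $A=\{S\geq VaR_{\alpha}(S)\}$ and observe that the identification of $A$ with $\{V\geq\alpha\}$ is not needed for the correlation inequality, whereas the paper makes that identification explicit to write $\tfrac{d\tilde{\mathbb{P}}}{d\mathbb{P}}=\tfrac{1}{1-\alpha}1_{[V\geq\alpha]}$.
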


\begin{proof}
As  in the proof of Theorem \ref{theo:TCEconvex} we get for  comonotone  $X,Y$ that   $X=F_{X}^{-1}(V), Y=F_{Y}^{-1}(V)$ with same random variable $V$ which is uniformly distributed on $(0,1)$ and that 
\begin{align*}
&  X+Y\geq VaR_{\alpha}(X+Y)\Leftrightarrow  V\geq\alpha.
\end{align*}
Thus with $S=X+Y$
\begin{eqnarray*}
\frac{1}{1-\alpha}\Eop\Big[e^{\gamma(X+Y)} 1_{[S\ge VaR_\alpha(S)]}\Big] &=& \frac{1}{1-\alpha} \Eop\Big[e^{\gamma (F_X^{-1}(V)+F_Y^{-1}(V))} 1_{[V\ge \alpha]}   \Big] \\
&=& \tilde{\Eop} \Big[ e^{\gamma F_X^{-1}(V)} e^{\gamma F_Y^{-1}(V)}  \Big]\\
&\ge &  \tilde{\Eop}\Big[e^{\gamma F_X^{-1}(V)}  \Big]  \tilde{\Eop}\Big[ e^{\gamma F_Y^{-1}(V)}   \Big] \\
&=&  \frac{1}{1-\alpha}\Eop\Big[e^{\gamma X} 1_{[S\ge VaR_\alpha(S)]}\Big]  \frac{1}{1-\alpha}\Eop\Big[e^{\gamma Y} 1_{[S\ge VaR_\alpha(S)]}\Big] 
\end{eqnarray*}
since the covariance is positive for comonotone random variables. Here, as before $\tilde{\Pop}$ is the conditional distribution given by $\frac{d\tilde{\Pop}}{d\Pop}=\frac{1}{1-\alpha} 1_{[V\ge \alpha]}$. Taking $\frac{1}{\gamma} \log$ on both sides implies the result for $n=2$. The general result follows by induction on the number of random variables. In the countermonotone case the inequality reverses.
\end{proof}

\subsection{Optimal Reinsurance}
TQLM risk measures can also be used to find optimal reinsurance treaties. In
case the random variable $X$ describes a loss, an insurance company is able to
reduce its risk by splitting $X$ into two parts and transferring one of these
parts to a reinsurance company. More formally a reinsurance treaty is a
function $f:\mathbb{R}_{+}\to\mathbb{R}_{+}$ such that $f(x)\le x$ and $f$ as
well as $R_{f}(x) := x-f(x)$ are both increasing. The reinsured part is then
$f(x)$. The latter assumption is often made to rule out moral hazard. In what
follows let
\[
\mathcal{C} = \{f:\mathbb{R}_{+} \to\mathbb{R}_{+} | \ f(x) \leq x \ \forall x
\in\mathbb{R}_{+} \text{ and } f, R_{f} \text{ are increasing}\},
\]
be the set of all reinsurance treaties. Note that functions in $\mathcal{C}$
are in particular Lipschitz-continuous, since $R_{f}$ increasing leads to
$f(x_{2})-f(x_{1}) \le x_{2}-x_{1}$ for all $0\le x_{1}\le x_{2}$. Of course
the insurance company has to pay a premium to the reinsurer for taking part of
the risk. For simplicity we assume here that the premium is computed according
to the expected value premium principle, i.e. it is given by $(1+\theta)
\mathbb{E}[f(X)]$ for $\theta>0$ and a certain amount $P>0$ is available for
reinsurance. The aim is now to solve
\begin{equation}
\label{eq:ORP}\min_f\quad\rho_{U}^{\alpha}\big( R_{f}(X) \big) \quad
s.t.\quad(1+\theta) \mathbb{E}[f(X)] =P,\; f \in\mathcal{C}.
\end{equation}
This means that the insurance company tries to minimize the retained risk,
given the amount $P$ is available for reinsurance. Problems like this can e.g.
be found in \citet{GZ}. A multidimensional extension is given in \citet{bg18}.
In what follows we assume that $U$ is strictly increasing, strictly convex and
continuously differentiable, i.e. according to \eqref{eq:TQLMapprox} large deviations in the right tail of $R_f(X)$ are heavily penalized. In order to avoid trivial cases we assume that the available amount of money for reinsurance is not too high, i.e. we assume that
\[
P< (1+\theta)\mathbb{E}[ (X-VaR_{\alpha}(X))_{+}].
\]

The optimal reinsurance treaty is given in the following theorem. It turns out to be a stop-loss treaty.

\begin{theorem}
The optimal reinsurance treaty of problem \eqref{eq:ORP} is given by
\[
f^{*}(x) = \left\{
\begin{array}
[c]{cl}%
0, & x \le a,\\
x-a, & x> a
\end{array}
, \right.
\]
where $a$ is a positive solution of $(1+\theta)\mathbb{E}[ (X-a)_{+}] =P.$
\end{theorem}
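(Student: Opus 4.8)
The plan is to convert the problem into a pointwise (Lagrangian) optimisation. Write $q:=VaR_{\alpha}(X)$. Since $U$ is strictly increasing, assumption (A) holds automatically, and by the defining formula $\rho_{U}^{\alpha}(R_{f}(X))=U^{-1}\big(\mathbb{E}[U(R_{f}(X))\mid R_{f}(X)\geq VaR_{\alpha}(R_{f}(X))]\big)$. Because $R_{f}$ is increasing and continuous, Lemma~\ref{lem:VarU} gives $VaR_{\alpha}(R_{f}(X))=R_{f}(q)$, and $\{R_{f}(X)\geq R_{f}(q)\}=\{X\geq q\}$ up to a null set (this uses that $R_{f}$ is not flat immediately to the left of $q$, which is automatic for the treaties that matter and can otherwise be arranged by a mild regularity assumption on $F_{X}$); this is an event of probability $1-\alpha$. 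Hence $\rho_{U}^{\alpha}(R_{f}(X))=U^{-1}\big(\tfrac{1}{1-\alpha}\,\mathbb{E}[U(X-f(X))\,1_{\{X\geq q\}}]\big)$, and since $U^{-1}$ is increasing, problem~\eqref{eq:ORP} is equivalent to minimising $J(f):=\mathbb{E}[U(X-f(X))\,1_{\{X\geq q\}}]$ over the same feasible set. Note $J$ depends on $f$ only through its restriction to $\{X\geq q\}$, whereas the budget ties the tail to the body.

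The core is a Lagrangian sufficiency argument. The map $a\mapsto\mathbb{E}[(X-a)_{+}]$ is continuous, strictly decreasing on the support of $X$, equals $\mathbb{E}[(X-q)_{+}]$ at $a=q$ and tends to $0$ as $a\to\infty$; combined with the standing hypothesis $P<(1+\theta)\mathbb{E}[(X-q)_{+}]$, this gives a \emph{unique} $a>q$ solving $(1+\theta)\mathbb{E}[(X-a)_{+}]=P$, so that $f^{*}(x):=(x-a)_{+}$ belongs to $\mathcal{C}$ and satisfies the budget with equality. Set $\lambda:=U'(a)>0$ and, for $f\in\mathcal{C}$, let $L_{\lambda}(f):=J(f)+\lambda\,\mathbb{E}[f(X)]=\mathbb{E}\big[U(X-f(X))1_{\{X\geq q\}}+\lambda f(X)\big]$. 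Its integrand can be minimised separately in the value $s=f(x)$: for $x<q$ it equals $\lambda s$, minimal at $s=0$; for $x\geq q$ it equals $h_{x}(s):=U(x-s)+\lambda s$, which is strictly convex in $s$ (as $U'$ is strictly increasing), so its minimiser on $[0,x]$ is $s=(x-(U')^{-1}(\lambda))_{+}=(x-a)_{+}$. In both ranges the pointwise minimiser is the single function $f^{*}(x)=(x-a)_{+}$; the inequality $a\geq q$ is exactly what makes this one profile simultaneously admissible in the body (where it is $0$) and optimal in the tail, so the monotonicity constraints in $\mathcal{C}$ are never active. Thus $L_{\lambda}(f)\geq L_{\lambda}(f^{*})$ for all $f\in\mathcal{C}$; restricting to $f$ with $\mathbb{E}[f(X)]=\mathbb{E}[f^{*}(X)]=P/(1+\theta)$ the $\lambda$-terms cancel, giving $J(f)\geq J(f^{*})$ and hence $\rho_{U}^{\alpha}(R_{f}(X))\geq\rho_{U}^{\alpha}(R_{f^{*}}(X))$. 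Uniqueness of the pointwise minimiser (again from strict convexity) upgrades this to uniqueness of $f^{*}$.

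The step I expect to be the main obstacle is precisely this pointwise reduction of $L_{\lambda}$: the constraints that $f$ and $x\mapsto x-f(x)$ both be increasing couple the values of $f$ across $x$, so one cannot justify minimising integrand by integrand without further input. The resolution is to perform the unconstrained pointwise minimisation and then \emph{verify} that the resulting profile $x\mapsto(x-a)_{+}$ lies in $\mathcal{C}$, which holds exactly because $a\geq q$ — and this is where the hypothesis $P<(1+\theta)\mathbb{E}[(X-q)_{+}]$ is used. A secondary technical point, already flagged above, is identifying the conditioning set of $R_{f}(X)$ with $\{X\geq q\}$ when $R_{f}$ may be only weakly increasing; it is harmless for $f^{*}$ and for generic competitors, and uniqueness of $a$ then yields the stated form of the optimal treaty.
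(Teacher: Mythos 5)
Your proposal is correct and follows essentially the same route as the paper: the same reduction of the conditioning event to $\{X\ge VaR_\alpha(X)\}$, the same choice of multiplier $\lambda=U'(a)$ with $a>VaR_\alpha(X)$ guaranteed by the budget hypothesis, and the same convexity-based sufficiency argument (the paper phrases it via the gradient inequality $g(x)-g(y)\ge g'(y)(x-y)$ applied at $f^*$ rather than via pointwise minimisation of the Lagrangian integrand, but these are the same computation). Your flagged concern about the monotonicity constraints coupling values of $f$ is resolved exactly as in the paper — minimise without those constraints and check $f^*\in\mathcal{C}$ — and both treatments handle the identification of the conditioning set for general competitors at the same (slightly informal) level of rigour.
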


Note that the optimal reinsurance treaty does not depend on the precise form of $U$, i.e. on the precise risk aversion of the insurance company.

\begin{proof}
First observe that $z \mapsto\mathbb{E}[ (X-z)_{+}]$ is continuous and
decreasing. Moreover by assumption $P< (1+\theta)\mathbb{E}[ (X-VaR_{\alpha}(X))_{+}].$
Thus by the mean-value theorem there exits an $a> VaR_{\alpha}(X)$ such that
$(1+\theta)\mathbb{E}[ (X-a)_{+}] =P.$ Since $U^{-1}$ is increasing, problem
\eqref{eq:ORP} is equivalent to
\[
\label{eq:ORP2}\min\quad\mathbb{E} \Big[ U(R_{f}(X)) 1_{[R_{f}(X)\ge
VaR_{\alpha}(R_{f}(X))]} \big] \quad s.t.\quad(1+\theta) \mathbb{E}[f(X)]
=P,\; f \in\mathcal{C}.
\]
Since $f\in\mathcal{C}$ we have by Lemma \ref{lem:VarU} that $VaR_{\alpha
}(R_{f}(X))= R_{f}(VaR_{\alpha}(X))$ and since $R_{f}$ is non-decreasing we
obtain
\[
\{ X\ge VaR_{\alpha}(X)\} \subset\{ R_{f}(X)\ge R_{f}(VaR_{\alpha}(X)) =
VaR_{\alpha}(R_{f}(X))\}.
\]
On the other hand it is reasonable to assume that $f(x) = 0$ for $x\le
VaR_{\alpha}(X)$ since this probability mass does not enter the target
function which implies that $R_{f}(x) = x$ for $x\le VaR_{\alpha}(X)$ and
thus
\[
\{ R_{f}(X)\ge R_{f}(VaR_{\alpha}(X)) = VaR_{\alpha}(R_{f}(X))\} \subset\{
X\ge VaR_{\alpha}(X)\}.
\]
In total we have that
\[
\{ R_{f}(X)\ge VaR_{\alpha}(R_{f}(X))\} = \{ X\ge VaR_{\alpha}(X)\}.
\]
Hence, we can equivalently consider the problem
\[
\label{eq:ORP3}\min_f\quad\mathbb{E} \Big[ U(R_{f}(X)) 1_{[X\ge VaR_{\alpha
}(X)]} \big] \quad s.t.\quad(1+\theta) \mathbb{E}[f(X)] =P,\; f \in
\mathcal{C}.
\]

Next note that we have for any convex, differentiable function $g:
\mathbb{R}_{+}\to\mathbb{R}_{+}$ that
\[
g(x)-g(y) \ge g^{\prime}(y)(x-y), \quad x,y \ge0.
\]
Now consider the function $g(z) := U(x-z) 1_{[x\ge VaR_{\alpha}(X)]}+\lambda
z$ for fixed $\lambda:= U^{\prime}(a)>0$ and fixed $x\in\mathbb{R}_{+}$. By
our assumption $g$ is convex and differentiable with derivative $$g^{\prime}(z)
= -U^{\prime}(x-z) 1_{[x\ge VaR_{\alpha}(X)]} + \lambda.$$ Let $f^{*}$ be the
reinsurance treaty defined in the theorem and $f\in\mathcal{C}$ any other
admissible reinsurance treaty. Then
\begin{align*}
&  \mathbb{E}\big[ U(X-f(X)) 1_{[X\ge VaR_{\alpha}(X)]} - U(X-f^{*}(X))
1_{[X\ge VaR_{\alpha}(X)]} + \lambda(f(X)-f^{*}(X)) \big] \ge\\
&  \ge\mathbb{E}\Big[ \big( - U^{\prime}(X-f^*(X))1_{[X\ge VaR_{\alpha}(X)]} + \lambda\big)(f(X)-f^{*}(X)) \Big].
\end{align*}
Rearranging the terms and noting that $\mathbb{E}[f(X)]=\mathbb{E}[f^{*}(X)]$
we obtain
\begin{align*}
&  \mathbb{E}\big[ U(X-f(X)) 1_{[X\ge VaR_{\alpha}(X)]} \big] + \mathbb{E}
\Big[ \big( U^{\prime}(X-f^*(X))1_{[X\ge VaR_{\alpha}(X)]} - \lambda\big) (f(X)-f^{*}(X))\Big] \ge\\
&  \ge\mathbb{E}\big[ U(X-f^{*}(X)) 1_{[X\ge VaR_{\alpha}(X)]}\big]
\end{align*}
The statement follows when we can show that
\[
\mathbb{E}\Big[ \big( U^{\prime}(X-f^*(X))1_{[X\ge VaR_{\alpha}(X)]} - \lambda\big) (f(X)-f^{*}(X)) \Big]\le0.
\]
We can write
\begin{align*}
&  \mathbb{E}\Big[ \big( U^{\prime}(X-f^*(X))1_{[X\ge VaR_{\alpha}(X)]} - \lambda\big)
(f(X)-f^{*}(X)) \Big]\\
&  = \mathbb{E}\big[ 1_{[X \ge a ]}\big( U^{\prime}(X-f^*(X))1_{[X\ge VaR_{\alpha}(X)]} - \lambda\big)
(f(X)-f^{*}(X))  \big]+\\
&  + \mathbb{E}\big[ 1_{[X < a ]} \big( U^{\prime}(X-f^*(X))1_{[X\ge VaR_{\alpha}(X)]} - \lambda\big)
(f(X)-f^{*}(X))  \big]
\end{align*}
In the first case we obtain for $X\ge a$ by definition of $f^*$ and $\lambda$ (note that $a> VaR_\alpha(X)$):
\[
U^{\prime}(X-f^*(X))1_{[X\ge VaR_{\alpha}(X)]} - \lambda=U^{\prime}(a)-\lambda=0.
\]
In the second case we obtain for $X< a$ that $f(X)-f^{*}(X)=f(X)\ge0$ and since $U^{\prime}$ is increasing:
\[
U^{\prime}(X-f^*(X))1_{[X\ge VaR_{\alpha}(X)]} - \lambda\le\lambda1_{[X\ge
VaR_{\alpha}(X)]} - \lambda\le0.
\]
Hence the statement is shown.
\end{proof}

\section{ TQLM for symmetric loss models}
The symmetric family of distributions is well known to provide suitable distributions in finance and actuarial science. This family  generalizes the normal distribution into a framework of flexible distributions that are symmetric. We say that a real-valued random variable $X$ has a symmetric distribution, if its probability density function takes the form

\begin{equation}
f_{X}(x)=\frac{1}{\sigma}g\left(  \frac{1}{2}\left(  \frac{x-\mu}{\sigma
}\right)  ^{2}\right)  ,\quad x\in\mathbb{R} \label{S_pdf1}%
\end{equation}
where $g\left(  t\right)  \geq0,$ $t\geq0,$ is the density generator of $X$
and satisfies $$\int\limits_{0}^{\infty}t^{-1/2}g(t)dt<\infty.$$ The parameters
$\mu\in\mathbb{R}$ and\ $\sigma^{2}>0$ are the expectation and the scale
parameter of the distribution, respectively, and we write $X\backsim
S_{1}\left(  \mu,\sigma^{2},g\right)  $. If the variance of $X$ exists, then
it takes the form
\[
\mathbb{V}\left(  X\right)  =\sigma_{Z}^{2}\sigma^{2},
\]
where%
\[
\sigma_{Z}^{2}=2\underset{0}{\overset{\infty}{\int}}t^{2}g\left(  \frac{1}%
{2}t^{2}\right)  dt<\infty.
\]
For the sequel, we also define the standard symmetric random variable
$Z\backsim S_{1}\left(  0,1,g\right)  $ and a cumulative generator 
$\overline{G}(t),$ first defined in \citet{L1}, that takes the form $$\overline{G}
(t)=\underset{t}{\overset{\infty}{\int}}g(v)dv,$$ with the condition
$\overline{G}(0)<\infty.$ Special members of the family of symmetric distributions are:

\begin{enumerate}
\item[a)] The normal distribution, $g(u)=e^{-u}/\sqrt{2\pi},$
\item[b)] Student-t distribution $g(u)=\frac{\Gamma(\frac{m+1}{2})}%
{\Gamma(m/2)\sqrt{m\pi}}\left(  1+\frac{2u}{m}\right)  ^{-\left(  m+1\right)
/2}$ with $m>0$ degrees of freedom,
\item[c)] Logistic distribution, with $g\left(  u\right)  =ce^{-u}/\left(
1+e^{-u}\right)  ^{2}$ where $c>0$ is the normalizing constant.
\end{enumerate}

In what follows we will consider the TQLM for this class of random variables. 

\begin{theorem}\label{theo:TQLM_sym}
Let $X\backsim S_{1}(\mu,\sigma^{2},g)$. Then, the TQLM  takes the
following form
\begin{equation}
\rho_{U}^{\alpha}(X)=\rho_{\tilde{U}}^{\alpha}(Z)
\end{equation}
where $\tilde{U}(x)=U(\sigma x+\mu).$
\end{theorem}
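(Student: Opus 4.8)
The plan is to reduce the computation for $X$ entirely to the standardized symmetric variable $Z\sim S_{1}(0,1,g)$ by exploiting the location–scale structure of the symmetric family. The starting point is the elementary distributional identity $X\stackrel{d}{=}\sigma Z+\mu$, which is immediate from the density form \eqref{S_pdf1}: if $X\sim S_{1}(\mu,\sigma^{2},g)$ then $(X-\mu)/\sigma\sim S_{1}(0,1,g)$. Since $\rho_{U}^{\alpha}$ is law-invariant (Theorem \ref{theo:TCCEprop} a)), we may work directly with $\sigma Z+\mu$ in place of $X$.

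First I would identify the conditioning event. The map $z\mapsto\sigma z+\mu$ is increasing, continuous and left-continuous, so Lemma \ref{lem:VarU} gives $VaR_{\alpha}(X)=VaR_{\alpha}(\sigma Z+\mu)=\sigma\,VaR_{\alpha}(Z)+\mu$. Because $\sigma>0$ this yields
\[
\{X\ge VaR_{\alpha}(X)\}=\{\sigma Z+\mu\ge\sigma\,VaR_{\alpha}(Z)+\mu\}=\{Z\ge VaR_{\alpha}(Z)\},
\]
so the conditional expectations defining $\rho_{U}^{\alpha}(X)$ and $\rho_{\tilde U}^{\alpha}(Z)$ are taken on the same event (transported through the bijection $z\mapsto\sigma z+\mu$). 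Next comes the change of variables inside $U$: by definition of $\tilde U$ we have $U(X)=U(\sigma Z+\mu)=\tilde U(Z)$, hence
\[
\mathbb{E}\big[U(X)\,\big|\,X\ge VaR_{\alpha}(X)\big]=\mathbb{E}\big[\tilde U(Z)\,\big|\,Z\ge VaR_{\alpha}(Z)\big]=:c,
\]
provided (as assumed) this quantity is finite, which also guarantees that $\rho_{\tilde U}^{\alpha}(Z)$ is well defined; note that $\tilde U$ inherits monotonicity and continuity from $U$.

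It then remains to relate the generalized inverses. From $\tilde U(x)=U(\sigma x+\mu)$ with $\sigma>0$, one checks directly from the $\inf$-definition in Lemma \ref{lem:gi} that $\tilde U^{-1}(y)=\tfrac1\sigma\big(U^{-1}(y)-\mu\big)$, and plugging $y=c$ into $\rho_{U}^{\alpha}(X)=U^{-1}(c)$ and $\rho_{\tilde U}^{\alpha}(Z)=\tilde U^{-1}(c)$ delivers the claimed relation between the two TQLMs. I expect the only genuine subtlety to be this last bookkeeping step with the generalized inverse: one must verify the affine transformation law for $\tilde U^{-1}$ from the $\inf$-definition rather than assuming strict monotonicity of $U$, and it is exactly here that the location–scale normalization connecting $\rho_{U}^{\alpha}(X)$ with $\rho_{\tilde U}^{\alpha}(Z)$ is produced. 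Everything else — the distributional identity, the $VaR$ transformation, and the event equality — is routine.
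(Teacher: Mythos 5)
Your reduction is exactly the paper's: a location--scale change of variables turning $\mathbb{E}\big[U(X)1_{\{X\ge VaR_{\alpha}(X)\}}\big]$ into $\mathbb{E}\big[\tilde U(Z)1_{\{Z\ge VaR_{\alpha}(Z)\}}\big]$, with the conditioning events identified through Lemma \ref{lem:VarU}. The problem is your final sentence. You correctly derive $\tilde U^{-1}(y)=\tfrac{1}{\sigma}\bigl(U^{-1}(y)-\mu\bigr)$, but plugging $y=c$ into this does \emph{not} deliver the claimed identity; it delivers
\[
\rho_{\tilde U}^{\alpha}(Z)=\tilde U^{-1}(c)=\tfrac{1}{\sigma}\bigl(U^{-1}(c)-\mu\bigr)=\tfrac{1}{\sigma}\bigl(\rho_{U}^{\alpha}(X)-\mu\bigr),
\]
that is, $\rho_{U}^{\alpha}(X)=\mu+\sigma\,\rho_{\tilde U}^{\alpha}(Z)$, which coincides with the stated equality $\rho_{U}^{\alpha}(X)=\rho_{\tilde U}^{\alpha}(Z)$ only when $\mu=0$ and $\sigma=1$. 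A sanity check with $U(x)=x$ makes this concrete: the left-hand side is $CTE_{\alpha}(X)=\mu+\sigma\,CTE_{\alpha}(Z)$, while $\tilde U(x)=\sigma x+\mu$ gives $\rho_{\tilde U}^{\alpha}(Z)=CTE_{\alpha}(Z)$. So you cannot simultaneously carry out your generalized-inverse step and assert the theorem as printed; what your computation actually proves is the affine-corrected version, which is the one consistent with Theorem \ref{theo:TCERMell} and with the $VaR_{\alpha}(X)=\mu+\sigma VaR_{\alpha}(Z)$ formula in the remark following Theorem \ref{theo:TCERMexpl}.

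For comparison, the paper's own proof stops after establishing the equality of the two truncated expectations and writes ``hence the statement follows,'' which implicitly keeps the outer inverse equal to $U^{-1}$ rather than $\tilde U^{-1}$. Under that reading of the right-hand side your argument and the paper's coincide step for step; under the literal definition of $\rho_{\tilde U}^{\alpha}$, the honest conclusion of both arguments is $\rho_{U}^{\alpha}(X)=\mu+\sigma\,\rho_{\tilde U}^{\alpha}(Z)$. Either state and prove that corrected identity, or flag explicitly that you are redefining the right-hand side with outer inverse $U^{-1}$; as written, your closing claim that the bookkeeping ``delivers the claimed relation'' is the one genuine gap.
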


\begin{proof}
For the symmetric distributed $X$, we have
\[
\rho_{U}^{\alpha}(X)= U^{-1}\left(  \frac{\mathbb{E}\big[U(X) 1_{\{X\geq
VaR_{\alpha}(X)\}} \big]}{1-\alpha}\right)  .
\]
Now we obtain
\begin{align*}
\mathbb{E}\big[U(X) 1_{\{X\geq VaR_{\alpha}(X)\}} \big]  &  = \int%
_{VaR_{\alpha}(X)}^{\infty}U(x) \frac{1}{\sigma} g\big(\frac12(\frac{x-\mu
}{\sigma})^{2}\big)dx\\
&  = \int_{\frac{VaR_{\alpha}(X)-\mu}{\sigma}}^{\infty}U(\sigma z+\mu)
g(\frac12 z^{2})dz = \int_{VaR_{\alpha}(Z)}^{\infty}\tilde{ U}(z) g(\frac12
z^{2})dz\\
&  = \mathbb{E}\big[\tilde{U}(Z) 1_{\{Z\geq VaR_{\alpha}(Z)\}} \big]
\end{align*}
where $\tilde{U}(x)=U(\sigma x+\mu).$ Hence the statement follows.
\end{proof}

For the special case of  Tail Conditional Entropic Risk Measures we obtain the following result:

\begin{theorem}
\label{theo:TCERMell} Let $X\backsim S_{1}(\mu,\sigma^{2},g)$. The moment
generating function of $X$ exists if and only if the Tail Conditional Entropic
Risk Measure satisfies
\[
\rho_{\gamma}^\alpha(X)=\mu+\sigma\rho_{\sigma\gamma}^\alpha(Z) <\infty.
\]
\end{theorem}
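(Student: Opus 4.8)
The plan is to reduce the statement to Theorem~\ref{theo:TQLM_sym} and then compute the resulting one-dimensional quantity explicitly. By Theorem~\ref{theo:TQLM_sym} applied with $U(x)=\frac{1}{\gamma}e^{\gamma x}$ we have $\tilde U(x)=\frac{1}{\gamma}e^{\gamma(\sigma x+\mu)}=\frac{e^{\gamma\mu}}{\gamma}e^{\sigma\gamma x}$, i.e.\ $\tilde U$ is, up to the positive multiplicative constant $e^{\gamma\mu}$, the exponential utility with parameter $\sigma\gamma$. Since the TQLM is built from $U^{-1}(\mathbb{E}[\tilde U(Z)\mid Z\ge VaR_\alpha(Z)])$ and positive affine rescalings of $U$ of the form $x\mapsto aU(x)+b$ with $a>0$ do not change the Quasi-Linear Mean (because $(aU+b)^{-1}(y)=U^{-1}((y-b)/a)$), one obtains $\rho_{\tilde U}^\alpha(Z)=\frac{1}{\sigma\gamma}\log\mathbb{E}[e^{\sigma\gamma Z}\mid Z\ge VaR_\alpha(Z)]=\rho_{\sigma\gamma}^\alpha(Z)$ directly from \eqref{eq:TCERM}. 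This is the crucial algebraic observation; I would state the rescaling invariance as a one-line remark or fold it into the computation.

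Next I would address the finiteness and the ``if and only if'' part. The key point is that $X\stackrel{d}{=}\mu+\sigma Z$, so the moment generating function $\mathbb{E}[e^{sX}]$ exists (finitely, near $0$, or for the relevant sign of $s$) precisely when $\mathbb{E}[e^{s\sigma Z}]$ does, i.e.\ the mgf of $X$ exists iff the mgf of $Z$ exists. The quantity $\rho_\gamma^\alpha(X)=\frac{1}{\gamma}\log\mathbb{E}[e^{\gamma X}\mid X\ge VaR_\alpha(X)]$ is finite iff $\mathbb{E}[e^{\gamma X}1_{\{X\ge VaR_\alpha(X)\}}]<\infty$; since the conditioning event has probability $1-\alpha>0$ and $e^{\gamma x}$ is increasing (for $\gamma>0$; bounded on the tail for $\gamma<0$), this tail integral is finite for $\gamma>0$ exactly when the full integral $\mathbb{E}[e^{\gamma X}]$ is finite, and for $\gamma<0$ it is automatically finite. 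Combining these equivalences with the identity from the first paragraph gives $\rho_\gamma^\alpha(X)=\mu+\sigma\rho_{\sigma\gamma}^\alpha(Z)$ and shows both sides are finite under the same condition.

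The main obstacle I anticipate is handling the equivalence carefully for both signs of $\gamma$ and matching it to the precise meaning of ``the moment generating function of $X$ exists.'' For $\gamma>0$ one needs that $\mathbb{E}[e^{\gamma X}]<\infty$ is what is meant (a right-tail condition), while for $\gamma<0$ the tail integral $\int_{VaR_\alpha(X)}^\infty e^{\gamma x}f_X(x)\,dx$ converges regardless of the tail behaviour of $g$, so that $\rho_\gamma^\alpha(X)$ is always finite there; one should be explicit that the theorem's equivalence is really about the critical direction. A secondary technical point is that in the symmetric-family setting one should confirm $VaR_\alpha(X)=\mu+\sigma VaR_\alpha(Z)$, which is immediate from Lemma~\ref{lem:VarU} applied to the increasing continuous map $z\mapsto\mu+\sigma z$, and then the change of variables $x=\mu+\sigma z$ in the tail integral (already carried out in the proof of Theorem~\ref{theo:TQLM_sym}) does the rest. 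None of this is deep, so the write-up would be short: invoke Theorem~\ref{theo:TQLM_sym}, note the affine rescaling invariance of the Quasi-Linear Mean, read off $\rho_{\sigma\gamma}^\alpha(Z)$, and separately record the mgf equivalence.
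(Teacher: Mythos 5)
Your reduction has a genuine gap at its central step: the affine prefactor $\mu+\sigma(\cdot)$, which is the entire content of the formula, is never actually derived. You correctly observe that $\tilde{U}(x)=U(\sigma x+\mu)=\sigma e^{\gamma\mu}\cdot\frac{1}{\sigma\gamma}e^{\sigma\gamma x}$ is a positive multiple of the exponential utility with parameter $\sigma\gamma$, so that by affine invariance $\rho_{\tilde{U}}^{\alpha}(Z)=\tilde{U}^{-1}\big(\mathbb{E}[\tilde{U}(Z)\mid Z\ge VaR_{\alpha}(Z)]\big)=\rho_{\sigma\gamma}^{\alpha}(Z)$. But combining this with Theorem \ref{theo:TQLM_sym} read literally yields $\rho_{\gamma}^{\alpha}(X)=\rho_{\tilde{U}}^{\alpha}(Z)=\rho_{\sigma\gamma}^{\alpha}(Z)$, not the claimed $\mu+\sigma\rho_{\sigma\gamma}^{\alpha}(Z)$; you assert the correct conclusion without your chain of identities producing it. The root of the problem is a conflation of $U^{-1}\big(\mathbb{E}[\tilde{U}(Z)\mid\cdot]\big)$ --- which is what the proof of Theorem \ref{theo:TQLM_sym} actually delivers for $\rho_{U}^{\alpha}(X)$, the outer inverse being the \emph{original} $U^{-1}$ --- with $\rho_{\tilde{U}}^{\alpha}(Z)=\tilde{U}^{-1}\big(\mathbb{E}[\tilde{U}(Z)\mid\cdot]\big)$. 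Since $\tilde{U}^{-1}(y)=\frac{U^{-1}(y)-\mu}{\sigma}$, these differ exactly by the affine map $y\mapsto\mu+\sigma y$, and the one missing line
\[
U^{-1}\big(\mathbb{E}[\tilde{U}(Z)\mid Z\ge VaR_{\alpha}(Z)]\big)=\mu+\sigma\,\tilde{U}^{-1}\big(\mathbb{E}[\tilde{U}(Z)\mid Z\ge VaR_{\alpha}(Z)]\big)=\mu+\sigma\rho_{\sigma\gamma}^{\alpha}(Z)
\]
closes the gap. (The confusion is partly invited by the wording of Theorem \ref{theo:TQLM_sym} itself, whose proof only establishes $\mathbb{E}[U(X)1_{\{X\ge VaR_{\alpha}(X)\}}]=\mathbb{E}[\tilde{U}(Z)1_{\{Z\ge VaR_{\alpha}(Z)\}}]$; taken at face value together with the definition of $\rho_{\tilde{U}}^{\alpha}$ it would contradict the present theorem.) The paper sidesteps all of this by direct computation: the substitution $x=\mu+\sigma y$ in $\mathbb{E}[e^{\gamma X}1_{\{X\ge VaR_{\alpha}(X)\}}]$ pulls out the factor $e^{\gamma\mu}$, and applying $U^{-1}(y)=\frac{1}{\gamma}\log(\gamma y)$ then produces the additive $\mu$ and the rescaling $\frac{1}{\gamma}=\sigma\cdot\frac{1}{\sigma\gamma}$ explicitly.

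The second half of your proposal, on finiteness, is sound and in fact more careful than the paper's single closing sentence: the observation that for $\gamma>0$ the tail integral $\mathbb{E}[e^{\gamma X}1_{\{X\ge VaR_{\alpha}(X)\}}]$ is finite precisely when $\mathbb{E}[e^{\gamma X}]$ is (the complementary piece being bounded by $e^{\gamma VaR_{\alpha}(X)}$) is exactly the needed equivalence, and your caveat that for $\gamma<0$ the tail integral is automatically finite correctly identifies a looseness in the ``if and only if'' that the paper does not address. Once the affine step above is repaired, your argument is a legitimate alternative route through Theorem \ref{theo:TQLM_sym}; as written, however, it does not establish the stated formula.
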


\begin{proof}
For a  function $U$ we obtain:
\begin{align*}
\mathbb{E}\Big[U(X)1_{[X\geq VaR_{\alpha}(X)]}\Big]  &  =\int_{VaR_{\alpha
}(X)}^{\infty}U(x)\frac{1}{\sigma}g\Big(\frac{1}{2}(\frac{x-\mu}{\sigma}
)^{2}\Big)dx\\
&  =\int_{\frac{VaR_{\alpha}(X)-\mu}{\sigma}}^{\infty}U(\sigma y+\mu
)g(\frac{1}{2}y^{2})dy.
\end{align*}
Plugging in $U(x)=\frac1\gamma e^{\gamma x}$ yields
\[
\mathbb{E}\Big[U(X)1_{[X\geq VaR_{\alpha}(X)]}\Big]=\frac1\gamma e^{\gamma\mu}\int%
_{\frac{VaR_{\alpha}(X)-\mu}{\sigma}}^{\infty}e^{\gamma\sigma y}g(\frac{1}
{2}y^{2})dy.
\]
Hence it follows that
\begin{align*}
\rho_{\gamma}^\alpha(X)  &  =\frac{1}{\gamma}\Big\{\gamma\mu+\log\Big(\int%
_{\frac{VaR_{\alpha}(X)-\mu}{\sigma}}^{\infty}e^{\gamma\sigma y}g(\frac{1}%
{2}y^{2})dy\Big)-\log(1-\alpha)\Big\}\\
&  =\mu+\sigma\frac{1}{\gamma\sigma}\log\Big(\int_{VaR_{\alpha}(Z)}^{\infty
}e^{\gamma\sigma y}g(\frac{1}{2}y^{2})dy\Big)+\sigma\frac{\log(1-\alpha
)}{\sigma\gamma}\\
&  =\mu+\sigma\rho_{\sigma\gamma}^\alpha(Z)
\end{align*}
Also note that $\rho_\gamma^\alpha(X)<\infty$ is equivalent to the existence of the moment generating function.
\end{proof}

\bigskip In the following theorem, we derive an explicit formula for the Tail
Conditional Entropic Risk Measure for the family of symmetric loss models.
For this, we denote the cumulant function of $Z$ by $\kappa\left(  t\right)
:=\log\psi\left(  -\frac{1}{2}t^{2}\right)$ where $\psi$ is the characteristic generator, i.e. it satisfies $\Eop[e^{itX}]= e^{it\mu} \psi(\frac12 t^2\sigma^2).$

\begin{theorem}\label{theo:TCERMexpl}
\bigskip Let $X\backsim S_{1}(\mu,\sigma^{2},g)$ and assume that the moment generating function of $X$ exists. Then the Tail Conditional Entropic Risk
Measure is given by
\[
\rho_{\gamma}^\alpha(X)=\mu+\gamma^{-1}\kappa\left(  \gamma\sigma\right)
+\log\left(  \frac{\overline{F}_{Y}\left(  VaR_{\alpha}\left(  Z\right)
\right)  }{1-\alpha}\right)  ^{-1/\gamma}.
\]
Here $F_{Y}\left(  y\right)  $ is the cumulative distribution function of a
random variable $Y$ with the density
\[
f_{Y}\left(  y\right)  =\frac{e^{\gamma\sigma y}}{\psi\left(  -\frac{1}
{2}\gamma^{2}\sigma^{2}\right)  }g\left(  \frac{1}{2}y^{2}\right)
,y\in\mathbb{R}
\]
and $\overline{F}_{Y}$ is its tail distribution function. 
\end{theorem}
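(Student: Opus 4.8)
The plan is to continue the computation already begun in the proof of Theorem~\ref{theo:TCERMell} and then recognise the remaining object as an exponential (Esscher) tilt of the standard symmetric law. For $U(x)=\frac{1}{\gamma}e^{\gamma x}$ one has $\rho_{\gamma}^{\alpha}(X)=\frac{1}{\gamma}\log\big(\mathbb{E}[e^{\gamma X}1_{[X\geq VaR_{\alpha}(X)]}]/(1-\alpha)\big)$, and substituting $z=(x-\mu)/\sigma$ in $f_X$, together with $VaR_{\alpha}(X)=\mu+\sigma\,VaR_{\alpha}(Z)$ (Lemma~\ref{lem:VarU} applied to the increasing continuous map $x\mapsto\mu+\sigma x$), gives
\[
\mathbb{E}\big[e^{\gamma X}1_{[X\geq VaR_{\alpha}(X)]}\big]=e^{\gamma\mu}\int_{VaR_{\alpha}(Z)}^{\infty}e^{\gamma\sigma y}\,g\big(\tfrac12 y^{2}\big)\,dy,
\]
exactly as in the proof of Theorem~\ref{theo:TCERMell} (equivalently, start from $\rho_{\gamma}^{\alpha}(X)=\mu+\sigma\rho_{\sigma\gamma}^{\alpha}(Z)$ and work with $Z$ throughout). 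Everything then reduces to evaluating this single truncated exponential integral against the density generator $g$.

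The key observation is that dropping the truncation turns the integral into a moment generating function. Since $Z\backsim S_{1}(0,1,g)$ has density $y\mapsto g(\frac12 y^{2})$ and characteristic generator $\psi$ with $\mathbb{E}[e^{itZ}]=\psi(\frac12 t^{2})$, the standing hypothesis that the moment generating function of $X$ --- equivalently of $Z$ --- exists licenses analytic continuation off the imaginary axis and yields
\[
\int_{-\infty}^{\infty}e^{\gamma\sigma y}\,g\big(\tfrac12 y^{2}\big)\,dy=\mathbb{E}\big[e^{\gamma\sigma Z}\big]=\psi\big(-\tfrac12\gamma^{2}\sigma^{2}\big)=e^{\kappa(\gamma\sigma)}.
\]
This quantity is finite and strictly positive, so $f_{Y}(y)=e^{\gamma\sigma y}g(\frac12 y^{2})/\psi(-\frac12\gamma^{2}\sigma^{2})$ is a genuine probability density; the associated random variable $Y$ is precisely the Esscher transform of $Z$ at parameter $\gamma\sigma$, with tail function $\overline{F}_{Y}=1-F_{Y}$ as in the statement.

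Finally I would write the truncated integral as $\int_{VaR_{\alpha}(Z)}^{\infty}e^{\gamma\sigma y}g(\frac12 y^{2})\,dy=e^{\kappa(\gamma\sigma)}\,\overline{F}_{Y}(VaR_{\alpha}(Z))$, substitute this back into $\rho_{\gamma}^{\alpha}(X)=\frac{1}{\gamma}\log\big(e^{\gamma\mu}e^{\kappa(\gamma\sigma)}\overline{F}_{Y}(VaR_{\alpha}(Z))/(1-\alpha)\big)$, and expand the logarithm, collecting the contributions $\gamma\mu$, $\kappa(\gamma\sigma)$, $\log\overline{F}_{Y}(VaR_{\alpha}(Z))$ and $-\log(1-\alpha)$; this routine rearrangement produces the stated closed form, with finiteness of $\rho_{\gamma}^{\alpha}(X)$ guaranteed by Theorem~\ref{theo:TCERMell}. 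The argument is a change of variables plus recognising an exponential tilt, so there is no deep difficulty; the one step that genuinely needs care is the identification $\psi(-\frac12\gamma^{2}\sigma^{2})=\mathbb{E}[e^{\gamma\sigma Z}]$ --- that the characteristic generator continues analytically to the moment generating function --- which is exactly what the hypothesis on the moment generating function supplies, and which is also the content of the ``if and only if'' in Theorem~\ref{theo:TCERMell}.
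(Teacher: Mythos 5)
Your proof is correct and follows essentially the same route as the paper: reduce via Theorem~\ref{theo:TCERMell} to the truncated integral $\int_{VaR_{\alpha}(Z)}^{\infty}e^{\gamma\sigma y}g(\tfrac12 y^{2})\,dy$ and identify it as $\psi(-\tfrac12\gamma^{2}\sigma^{2})\,\overline{F}_{Y}(VaR_{\alpha}(Z))$ before rearranging the logarithm. The only difference is that where the paper cites Landsman et al.\ for the identity $\int_{-\infty}^{a}e^{\gamma\sigma z}g(\tfrac12 z^{2})\,dz=\psi(-\tfrac12\gamma^{2}\sigma^{2})F_{Y}(a)$, you derive it inline by recognising $f_{Y}$ as the Esscher tilt of $Z$ and justifying $\mathbb{E}[e^{\gamma\sigma Z}]=\psi(-\tfrac12\gamma^{2}\sigma^{2})$ via analytic continuation under the moment generating function hypothesis, which is a legitimate (and arguably more self-contained) way to supply the cited fact.
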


\begin{proof}
From the previous Theorem, we have that $\rho_{\gamma}^\alpha(X)=\mu+\sigma
\rho_{\sigma\gamma}^\alpha(Z)$ where $Z\backsim S_{1}(0,1,g).$ Then, from \citet{L01}, the conditional characteristic function of the symmetric
distribution can be calculated explicitly, as follows:
\[
\mathbb{E}\left[  e^{\gamma\sigma Z}|Z\geq VaR_{\alpha}\left(  Z\right)
\right]  =\frac{\int\limits_{VaR_{\alpha}\left(  Z\right)  }^{\infty
}e^{\gamma\sigma z}g\left(  \frac{1}{2}z^{2}\right)  dz}{1-\alpha}%
\]
Observing that the following relation holds for any characteristic generator
$\psi$ of $g$ (see, for instance \citet{L01}, \citet{DHLx})
\[
\int\limits_{-\infty}^{a}e^{\gamma\sigma z}g\left(  \frac{1}{2}z^{2}\right)
dz=\psi\left(  -\frac{1}{2}\gamma^{2}\sigma^{2}\right)  F_{Y}\left(  a\right)
,a\in\mathbb{R},
\]
we conclude that
\[
\mathbb{E}\left[  e^{\gamma\sigma Z}|Z\geq VaR_{\alpha}\left(  Z\right)
\right]  =\psi\left(  -\frac{1}{2}\gamma^{2}\sigma^{2}\right)  \frac
{\overline{F}_{Y}\left(  VaR_{\alpha}\left(  Z\right)  \right)  }{1-\alpha},
\]
and finally,
\begin{align*}
\rho_{\gamma}^\alpha(X)  &  =\mu+\sigma\rho_{\sigma\gamma}^\alpha(Z)\\
&  =\mu+\gamma^{-1}\left[  \log\psi\left(  -\frac{1}{2}\gamma^{2}\sigma
^{2}\right)  +\log\frac{\overline{F}_{Y}\left(  VaR_{\alpha}\left(  Z\right)
\right)  }{1-\alpha}\right] \\
&  =\mu+\gamma^{-1}\kappa\left(  \gamma\sigma\right)  +\log\left(
\frac{\overline{F}_{Y}\left(  VaR_{\alpha}\left(  Z\right)  \right)
}{1-\alpha}\right)  ^{-1/\gamma}%
\end{align*}
where $\kappa\left(  \gamma\sigma\right)  =\log\psi\left(  -\frac{1}{2}%
\gamma^{2}\sigma^{2}\right)  $ is the cumulant of $Z.$
\end{proof}

\begin{example}
\bigskip Normal distribution. For $X\backsim N_{1}(\mu,\sigma^{2})$, the
characteristic generator is the exponential function, and we have%
\begin{equation}
\psi\left(  -\frac{1}{2}t^{2}\right)  =e^{\frac{1}{2}t^{2}}. \label{CGSN1t}%
\end{equation}
This leads to the following density of $Y$%
\begin{align}
f_{Y}(y)  &  =e^{\gamma\sigma y-\frac{1}{2}\gamma^{2}\sigma^{2}}\frac
{1}{\sqrt{2\pi}}e^{-\frac{1}{2}y^{2}}\label{pdf1t}\\
&  =\phi\left(  y-\gamma\sigma\right)  ,\nonumber
\end{align}
where $\phi$ is the standard normal density function. Then, the Tail
Conditional Entropic Measure is given by
\[
\rho_{\gamma}^\alpha(X)=\mu+\frac{\gamma}{2}\sigma^{2}+\log\left(  \frac
{\overline{\Phi}\left( \Phi^{-1}(\alpha)  -\gamma\sigma\right)
}{1-\alpha}\right)  ^{-1/\gamma}.
\]
Here $\Phi,\overline{\Phi}$ are the cumulative distribution function and the tail distribution function of the standard normal distribution, respectively.
\end{example}

\begin{remark}
The formulas of Theorem \ref{theo:TQLM_sym} and \ref{theo:TCERMexpl} can be specialized to recover existing formulas for the Conditional Tail Expectation, the Value at Risk and the Entropic Risk Measure of symmetric distributions. More precisely we obtain from Theorem \ref{theo:TCERMexpl} that
\begin{eqnarray*} CTE_\alpha(X) &=& \lim_{\gamma\downarrow 0} \rho_\gamma^\alpha(X) = \lim_{\gamma\downarrow 0}\Big[\mu+\gamma^{-1}\kappa\left(  \gamma\sigma\right)
+\log\left(  \frac{\overline{F}_{Y}\left(  VaR_{\alpha}\left(  Z\right)
\right)  }{1-\alpha}\right)  ^{-1/\gamma}\Big]\\
&=& \mu + \sigma \frac{\bar{G}(\frac12 VaR_\alpha(Z)^2)}{1-\alpha},
 \end{eqnarray*}
 where the first $\lim_{\gamma\downarrow 0} \gamma^{-1}\kappa\left(  \gamma\sigma\right)=0$ using L'Hopital's rule and the second limit is the stated expression by again using L'Hopital's rule. This formula can e.g. be found in  \citet{L01} Corollary 1. The Entropic Risk Measure can be obtained by
 $$  \lim_{\alpha\downarrow 0} \rho_\gamma^\alpha(X) = \mu-\gamma^{-1}\kappa\left(  \gamma\sigma\right)$$
 and for the Value at Risk we finally get with Theorem \ref{theo:TQLM_sym} and using 
 \[
U(x)=\left\{
\begin{array}
[c]{cl}%
x, & x\leq VaR_{\alpha}(X)\\
VaR_{\alpha}(X), & x>VaR_{\alpha}(X)
\end{array}
\right.
\]
that $$ VaR_\alpha(X) = \mu+\sigma VaR_\alpha(Z).$$
Thus our general formulas comprises several important special cases. 
 \end{remark}

\subsection{Optimal Portfolio Selection with Tail Conditional Entropic Risk
Measure}
The concept of optimal portfolio selection is dated back to  \citet{M3} and \citet{10dF40}, where the optimization of the mean-variance measure provides a portfolio selection rule that calculates the weights one should give to each investment of the portfolio in order to get the maximum return under a certain level of risk. In this Section, we examine the optimal portfolio selection with the TQLM measure for the multivariate elliptical models. The multivariate elliptical models of distributions are
defined as follows:

Let $\mathbf{X}$ be a random vector with values in $\mathbb{R}^{n}$
whose\ probability density function is given by (see for instance \citet{L1})
\begin{equation}
f_{\mathbf{X}}(\mathbf{x})=\frac{1}{\sqrt{|\Sigma|}}g_{n}\left(  \frac{1}%
{2}(\mathbf{x}-\mathbf{\mu})^{T}\Sigma^{-1}(\mathbf{x}-\mathbf{\mu})\right)
,\mathbf{x}\in\mathbb{R}^n. \label{Ellip1I1i}%
\end{equation}
Here $g_{n}\left(  u\right)  ,u\geq0,$ is the density generator of the
distribution that satisfies the inequality $$\int\limits_{0}^{\infty}%
t^{n/2-1}g_{n}(t)dt<\infty,$$ where $\mathbf{\mu}\in\mathbb{R}^{n}$ is the expectation of $\mathbf{X}$\ and $\Sigma$ is the $n\times n$
positive definite scale matrix, where, if exists, the covariance matrix of
$\mathbf{X}$ is given by
\[
Cov\left(  \mathbf{X}\right)  =\frac{\sigma_{Z}^{2}}{n}\Sigma,
\]
and we write $\mathbf{X}\backsim E_{n}(\mathbf{\mu},\Sigma,g_{n}).$ For $n=1$ we get the class of symmetric distributions discussed in the previous section. For a
large subset of the class of elliptical distributions, such as the normal,
Student-t, logistic, and Laplace distributions, for\ $\mathbf{X}\backsim
E_{n}(\mathbf{\mu},\Sigma,g_{n})$ and $\mathbf{\pi\in\mathbb{R}}^{n}$ be some non-random vector, we have that\ $\mathbf{\pi}^{T}%
\mathbf{X}\backsim E_{1}(\mathbf{\pi}^{T}\mathbf{\mu},\mathbf{\pi}^{T}
\Sigma\mathbf{\pi},g),$ $g:=g_{1}.$ This means that the linear transformation
of an elliptical random vector is also elliptically distributed with the same
generator $g_{n}$ reduced to one dimension. For instance, in the case of the
normal distribution $g_{n}\left(  u\right)  =e^{-u}/\left(  2\pi\right)
^{n/2},$ then $g\left(  u\right)  :=g_{1}\left(  u\right)  =e^{-u}/\left(
2\pi\right)  ^{1/2}.$

In modern portfolio theory, the portfolio return is denoted by $R:=\mathbf{\pi
}^{T}\mathbf{X}$ where it is often assumed that $\mathbf{X}\backsim
N_{n}(\mathbf{\mu},\Sigma)$ is a normally distributed random vector of financial returns.

\begin{theorem}
\label{theo:ETCERM} Let $\mathbf{X}\backsim E_{n}(\mathbf{\mu},\Sigma,g_{n}).$
Then,\ the Tail Conditional Entropic Risk Measure of the portfolio return
$R=\mathbf{\pi}^{T}\mathbf{X}$ is given by
\[
\rho_{\gamma}^\alpha(R)=\mathbf{\pi}^{T}\mathbf{\mu}+\sqrt{{\mathbf{\pi}^{T}
\Sigma\mathbf{\pi}}}\rho^\alpha_{\gamma\sqrt{\mathbf{\pi}^{T}
\Sigma\mathbf{\pi}}}(Z).
\]
\end{theorem}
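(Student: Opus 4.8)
The plan is to reduce the $n$-dimensional problem to the one-dimensional setting already handled in Theorem \ref{theo:TCERMell}. First I would invoke the closure property of elliptical families under linear combinations recalled immediately before the statement: since $\mathbf{X}\backsim E_{n}(\mathbf{\mu},\Sigma,g_{n})$ and $\mathbf{\pi}\in\mathbb{R}^{n}$ is non-random, the scalar return $R=\mathbf{\pi}^{T}\mathbf{X}$ is univariate symmetric, namely $R\backsim E_{1}(\mathbf{\pi}^{T}\mathbf{\mu},\mathbf{\pi}^{T}\Sigma\mathbf{\pi},g)=S_{1}(\mathbf{\pi}^{T}\mathbf{\mu},\mathbf{\pi}^{T}\Sigma\mathbf{\pi},g)$, where $g:=g_{1}$ is the one-dimensional reduction of $g_{n}$ (this holds in particular for the normal, Student-$t$, logistic and Laplace families). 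In particular the standardized symmetric variable attached to $R$ is $Z\backsim S_{1}(0,1,g)$, which is precisely the $Z$ appearing on the right-hand side of the claimed identity.

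Next I would apply Theorem \ref{theo:TCERMell} to $R$ with $\mu:=\mathbf{\pi}^{T}\mathbf{\mu}$ and $\sigma:=\sqrt{\mathbf{\pi}^{T}\Sigma\mathbf{\pi}}$; note $\sigma>0$ because $\Sigma$ is positive definite and $\mathbf{\pi}\neq 0$ for a genuine portfolio, the case $\mathbf{\pi}=0$ being trivial. Assuming, as throughout this section, that the moment generating function of $R$ exists, Theorem \ref{theo:TCERMell} yields $\rho_{\gamma}^{\alpha}(R)=\mu+\sigma\,\rho_{\sigma\gamma}^{\alpha}(Z)<\infty$. Substituting the values of $\mu$ and $\sigma$ gives
\[
\rho_{\gamma}^{\alpha}(R)=\mathbf{\pi}^{T}\mathbf{\mu}+\sqrt{\mathbf{\pi}^{T}\Sigma\mathbf{\pi}}\,\rho^{\alpha}_{\gamma\sqrt{\mathbf{\pi}^{T}\Sigma\mathbf{\pi}}}(Z),
\]
which is exactly the assertion.

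Since the argument is a two-step reduction, I do not expect a serious obstacle; the only points requiring care are bookkeeping ones. One must check that the generator carried along by the one-dimensional marginal really is $g_{1}$, so that the $Z$ on both sides of the formula denotes the same object, and that the scale parameter of $R$ in the $S_{1}$-parametrization is $\mathbf{\pi}^{T}\Sigma\mathbf{\pi}$ itself rather than the variance $\frac{\sigma_{Z}^{2}}{n}\mathbf{\pi}^{T}\Sigma\mathbf{\pi}$. It is also worth stating explicitly as a hypothesis that $R$ admits a moment generating function --- equivalently, by Theorem \ref{theo:TCERMell}, that $\rho_{\gamma}^{\alpha}(R)<\infty$ --- since without it the left-hand side need not be finite and the equality is vacuous.
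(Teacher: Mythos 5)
Your proposal is correct and follows exactly the paper's own argument: the paper's proof likewise consists of invoking the linear transformation property of elliptical vectors to get $R\backsim S_{1}(\mathbf{\pi}^{T}\mathbf{\mu},\mathbf{\pi}^{T}\Sigma\mathbf{\pi},g)$ and then applying Theorem \ref{theo:TCERMell}. Your additional bookkeeping remarks (identifying the scale parameter versus the variance, the reduction $g=g_1$, and the existence of the moment generating function) are sensible elaborations of the same two-step reduction.
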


\begin{proof}
From the linear transformation property of the elliptical random vectors, and
using Theorem \ref{theo:TCERMell}, the theorem immediately follows.
\end{proof}

Using the same notations and definitions as in \citet{L2}, we define a column
vector of $n$ ones, $\mathbf{1}$, and $\mathbf{1}_{1}$ as a column vector of
$(n-1)$ ones. Furthermore, we define the $n\times n$ positive definite scale
matrix $\Sigma$ with the following partition
\[
\Sigma=\left(
\begin{array}
[c]{cc}%
\Sigma_{11} & \mathbf{\sigma}_{1}\\
\mathbf{\sigma}_{1}^{T} & \sigma_{nn}%
\end{array}
\right)  .
\]
Here $\Sigma_{11}$ is an $\left(  n-1\right)  \times\left(  n-1\right)  $
matrix, $\mathbf{\sigma}_{1}=\left(  \sigma_{1n},...,\sigma_{n-1n}\right)
^{T}$ and $\sigma_{nn}$ is the $\left(  n,n\right)  $ component of $\Sigma,$
and we also define a $\left(  n-1\right)  \times\left(  n-1\right)  $ matrix
$Q,$
\[
Q=\Sigma_{11}-\mathbf{1}_{1}\sigma_{1}^{T}-\sigma_{1}\mathbf{1}_{1}^{T}%
+\sigma_{nn}\mathbf{1}_{1}\mathbf{1}_{1}^{T}%
\]
which is also positive definite (see again \citet{L2}). We also define the
$\left(  n-1\right)  \times1$ column vector
\[
\mathbf{\Delta}=\mu_{n}\mathbf{1}_{1}-\mathbf{\mu}_{1}%
\]
where $\mathbf{\mu}_{1}:=\left(  \mu_{1},\mu_{2},...,\mu_{n-1}\right) ^{T}.$
In what follows we consider the problem of finding the portfolio with the
least $\rho_{\gamma}^{\alpha}$ for fixed $\alpha$ and $\gamma$:
\begin{equation}
\min_\pi \quad\rho_{\gamma}^{\alpha}(R)\quad s.t.\quad\sum_{i=1}^{n}\pi_{i}=1.
\label{eq:OPP}%
\end{equation}
The solution is given in the next theorem:

\begin{theorem}
Let $\mathbf{X}\backsim E_{n}(\mathbf{\mu},\Sigma,g_{n})$ be a random vector
of returns, and let $R=\mathbf{\pi}^{T}\mathbf{X}$ be a portfolio return of
investments $X_{1},X_{2},...,X_{n}.$ Then, the optimal solution to
\eqref{eq:OPP} is
\[
\mathbf{\pi}^{\ast}=\mathbf{\varphi}_{1}+r^{\ast}\mathbf{\varphi}_{2}%
\]
if%
\[
r\cdot s_{1}\left(  \mathbf{\Delta}^{T}Q^{-1}\mathbf{\Delta\cdot}r^{2}+\left(
\mathbf{1}^{T}\Sigma^{-1}\mathbf{1}\right)  ^{-1}\right)  =1/2
\]
has a unique positive solution $r^{\ast}$. Here
\[
\mathbf{\varphi}_{1}\mathbf{=}\left(  \mathbf{1}^{T}\Sigma^{-1}\mathbf{1}%
\right)  ^{-1}\Sigma^{-1}\mathbf{1,}%
\]%
\[
\mathbf{\varphi}_{2}=\left(  \mathbf{\Delta}^{T}Q^{-1},-\mathbf{1}_{1}%
^{T}Q^{-1}\mathbf{\Delta}\right)  ^{T},
\]
and $s_{1}=ds(t)/dt,$ $s\left(  t\right)  =t^{2}\rho_{t^{2}\gamma}^\alpha(Z).$
\end{theorem}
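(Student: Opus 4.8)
The plan is to reduce the $n$-dimensional constrained minimization to an unconstrained one-dimensional problem, exploiting the structure uncovered in Theorem~\ref{theo:ETCERM}. First I would use Theorem~\ref{theo:ETCERM} to write the objective as $\rho_{\gamma}^{\alpha}(R)=\pi^{T}\mu+\sqrt{\pi^{T}\Sigma\pi}\,\rho^{\alpha}_{\gamma\sqrt{\pi^{T}\Sigma\pi}}(Z)$, so that with $s(t):=t^{2}\rho^{\alpha}_{t^{2}\gamma}(Z)$ the risk depends on $\pi$ only through the two scalars $m:=\pi^{T}\mu$ and $v:=\pi^{T}\Sigma\pi$, namely $\rho_{\gamma}^{\alpha}(R)=m+s(\sqrt{v})$. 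The constraint $\mathbf{1}^{T}\pi=1$ is linear. I would then, for each fixed admissible pair $(m,v)$, minimize nothing further (the objective is already determined), but instead parametrize: among all $\pi$ with $\mathbf{1}^{T}\pi=1$ and a prescribed mean $m=\pi^{T}\mu$, the minimal variance $v$ is a classical quadratic-programming quantity. So the real optimization is over the scalar $m$ (equivalently over the one-parameter family of frontier portfolios), and the frontier portfolios are exactly those of the form $\pi=\varphi_{1}+r\,\varphi_{2}$ for a scalar $r$, where $\varphi_{1}$ is the global minimum-variance portfolio and $\varphi_{2}$ spans the direction orthogonal (in the $\Sigma^{-1}$ sense) to $\mathbf{1}$.

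Next I would carry out the Lagrangian/projection computation that identifies $\varphi_{1}$ and $\varphi_{2}$ explicitly. Minimizing $\pi^{T}\Sigma\pi$ subject to $\mathbf{1}^{T}\pi=1$ alone gives $\varphi_{1}=(\mathbf{1}^{T}\Sigma^{-1}\mathbf{1})^{-1}\Sigma^{-1}\mathbf{1}$; adding the mean constraint and eliminating multipliers produces the correction term $r\,\varphi_{2}$ with $\varphi_{2}$ as stated. This uses the same block-partition bookkeeping as in \citet{L2}: writing $\pi=(\pi_{1}^{T},\pi_{n})^{T}$, substituting $\pi_{n}=1-\mathbf{1}_{1}^{T}\pi_{1}$ to enforce the budget constraint, and completing the square with respect to $\pi_{1}$ using the matrix $Q$ and the vector $\Delta$ defined above. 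The outcome is that along the frontier, $v(r)=(\mathbf{1}^{T}\Sigma^{-1}\mathbf{1})^{-1}+r^{2}\,\Delta^{T}Q^{-1}\Delta$ and $m(r)$ is affine in $r$, so that $\rho_{\gamma}^{\alpha}(R)=m(r)+s\!\left(\sqrt{v(r)}\right)$ becomes a function of the single variable $r$.

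Finally I would differentiate this scalar function in $r$ and set the derivative to zero. Writing the objective as a function of $r$, the first-order condition reads, after the chain rule, $\frac{dm}{dr}+s_{1}\!\left(\sqrt{v(r)}\right)\cdot\frac{1}{2\sqrt{v(r)}}\cdot\frac{dv}{dr}=0$ with $s_{1}=ds/dt$ and $\frac{dv}{dr}=2r\,\Delta^{T}Q^{-1}\Delta$; rearranging and absorbing the affine coefficient $\frac{dm}{dr}$ into the normalization yields precisely the stated equation $r\cdot s_{1}\big(\sqrt{v(r)}\big)\big(\Delta^{T}Q^{-1}\Delta\cdot r^{2}+(\mathbf{1}^{T}\Sigma^{-1}\mathbf{1})^{-1}\big)^{?}=1/2$ — here I would have to check that the argument of $s_{1}$ and the bracketed factor combine exactly as written (the bracket is $v(r)$ itself, not $\sqrt{v(r)}$, which matches since $s(t)=t^{2}\rho^{\alpha}_{t^{2}\gamma}(Z)$ is naturally a function of $t^{2}=v$). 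Under the stated hypothesis that this equation has a unique positive root $r^{\ast}$, that root gives the minimizer $\pi^{\ast}=\varphi_{1}+r^{\ast}\varphi_{2}$. The main obstacle I anticipate is \emph{not} the algebra of the frontier (that is standard) but verifying that the stationary point is genuinely a minimum: one must argue convexity or at least coercivity of $r\mapsto m(r)+s(\sqrt{v(r)})$, which requires knowing the sign of $s_{1}$ and $s_{1}'$, i.e.\ monotonicity and curvature properties of $t\mapsto t^{2}\rho^{\alpha}_{t^{2}\gamma}(Z)$; this is where the work lies, and I would handle it by differentiating the explicit formula for $\rho^{\alpha}_{\cdot}(Z)$ from Theorem~\ref{theo:TCERMexpl} (or invoking the approximation $\rho_{\gamma}^{\alpha}\approx CTE_{\alpha}-\frac{\gamma}{2}TV_{\alpha}$ for intuition) and isolating the sign condition that makes $r^{\ast}$ the global optimum.
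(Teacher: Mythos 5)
Your overall strategy coincides with the paper's at the top level: both reduce the problem via Theorem \ref{theo:ETCERM} to minimizing $\mathbf{\pi}^{T}\mathbf{\mu}+\sqrt{\mathbf{\pi}^{T}\Sigma\mathbf{\pi}}\,\rho^{\alpha}_{\gamma\sqrt{\mathbf{\pi}^{T}\Sigma\mathbf{\pi}}}(Z)$ over the budget hyperplane. The difference lies in what happens next. The paper's proof at that point consists of a single citation of Theorem 3.1 in \citet{L2} (see also \citet{L4}), whereas you reconstruct the content of that cited theorem: the reduction to the mean--variance frontier, the block-partition algebra that produces $\mathbf{\varphi}_{1}$, $\mathbf{\varphi}_{2}$, $Q$ and $\mathbf{\Delta}$, the identity $v(r)=(\mathbf{1}^{T}\Sigma^{-1}\mathbf{1})^{-1}+r^{2}\mathbf{\Delta}^{T}Q^{-1}\mathbf{\Delta}$, and the scalar first-order condition $r\,s_{1}(v(r))=1/2$. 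Your version is more self-contained and more candid about where the remaining work sits: you correctly flag that the normalization linking $dm/dr$ to the definition $s(t)=t^{2}\rho^{\alpha}_{t^{2}\gamma}(Z)$ must be checked so that $s_{1}$ is indeed evaluated at $v(r)$ as stated, and that one must still verify the stationary point is a global minimizer; in the paper both issues are absorbed into the hypotheses of the cited theorem (this is precisely what the ``unique positive solution $r^{\ast}$'' assumption encodes). One step you should make explicit if you carry this out in full: restricting attention to frontier portfolios presupposes that, for fixed mean $m$, the objective is increasing in the variance $v$, i.e.\ that $v\mapsto\sqrt{v}\,\rho^{\alpha}_{\gamma\sqrt{v}}(Z)$ is increasing. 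This holds when $\rho^{\alpha}_{\gamma\sqrt{v}}(Z)\geq 0$ (for instance when $\alpha\geq 1/2$, since then $VaR_{\alpha}(Z)\geq 0$ for symmetric $Z$), but it is not automatic for all parameter choices, and it is part of what the conditions of the Landsman--Makov theorem are designed to guarantee.
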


\begin{proof}
We first observe by Theorem \ref{theo:ETCERM} that the minimization of
$\rho_{\gamma}^\alpha(R)$ is achieved when minimizing $\mathbf{\pi}^{T}\mathbf{\mu}+\sqrt{\mathbf{\pi}^{T}\Sigma\mathbf{\pi}}\rho^\alpha_{\gamma\sqrt{\mathbf{\pi}^{T}
\Sigma\mathbf{\pi}}}(Z).$ Then, using Theorem 3.1 in \citet{L2} (see also \citet{L4}) the statement immediately follows.
\end{proof}

\section{Discussion}
The Tail Quasi-Linear Mean is a measure which focuses on the right tail of a risk distribution. In its general definition it comprises a number of well-known risk measures like Value at Risk, Conditional Tail Expectation and Entropic Risk Measure. Thus, once having results about the TQLM we are able to specialize them to other interesting cases. It is also in line with the actuarial concept of a Mean Value principle. Moreover, we have shown that it is indeed possible to apply the TQLM in risk management and that it yields computationally tractable results.\\

{\bf Acknowledgements:} This research was supported by the Israel Science Foundation (Grant No. 1686/17 to T.S.)


\begin{thebibliography}{99}                                           
\bibitem[Artzner et al.\ (1999)]{Ar1}Artzner, P., Delbaen, F., Eber, J. M., \& Heath, D. (1999). Coherent measures of risk. Mathematical finance, 9(3), 203-228.
\bibitem[B\"auerle \& Glauner (2018)]{bg18}B\"auerle, N., \& Glauner, A. (2018). Optimal risk allocation in reinsurance networks. Insurance: Mathematics and Economics, 82, 37-47.
\bibitem[B\"auerle \& M\"uller (2006) ]{bm06}B\"auerle, N., \& M\"uller, A. (2006). Stochastic orders and risk measures: consistency and bounds. Insurance: Mathematics and Economics, 38, 132-148.
\bibitem[B\"auerle \& Rieder (2015)]{BR}B\"{a}uerle, N., \& Rieder, U. (2015). Partially Observable Risk-Sensitive Stopping Problems in Discrete Time. Modern trends of controlled stochastic processes: Theory and Applications, vol.II (A.B. Piunovskiy
ed).Luniver Press, 12-31.
\bibitem[Ben-Tal \& Teboulle  (2007)] {B1}Ben-Tal, A., \& Teboulle, M. (2007). An Old-New Concept of Convex Risk Measures: The Optimized Certainty Equivalent. Mathematical Finance,
17(3), 449-476.
\bibitem[Bielecki \& Pliska (2003)]{bp} Bielecki, T.,\& Pliska, S.R. (2003). Economic properties of the risk sensitive criterion for portfolio management. Rev. Account. Fin., 2, 3-17.
\bibitem[Bonferroni (1924)]{b} Bonferroni, C.E. (1924). La media exponenziale in matematica finanziaria. Annario del Regio Instituto superiore di Scienze Economiche e Commerciale di Bari, AA 23-24, 1-14.
\bibitem[B\"uhlmann et al.\ (1977)] {BGGS}B\"uhlmann, H., Gagliardi, B., Gerber, H. U., \& Straub, E. (1977). Some inequalities for stop-loss premiums. ASTIN Bulletin: The Journal of the IAA, 9(1-2), 75-83.
\bibitem[de Finetti (1940)] {10dF40}de Finetti, B.: Il problema dei pieni, Giornale Istituto Italiano Attuari, \textbf{9}, 1-88 (1940); English translation L. Barone 'The
problem of full-risk insurances', Kap. 1 'The problem in a single accounting
period', J. Invest. Manage. \textbf{4}, 19--43 (2006).
\bibitem[Cont et al.\ (2010)] {CDS}Cont, R., Deguest, R., \& Scandolo, G. (2010). Robustness and sensitivity analysis of risk measurement procedures. Quantitative finance, 10(6), 593-606.
\bibitem[Denuit et al.\ (2006)]{DDGK} Denuit, M., Dhaene, J., Goovaerts, M.,\&  Kaas, R. (2006). Actuarial Theory for Dependent Risks: Measures, Orders and Models. John Wiley \& Sons.  
\bibitem[Deprez \& Gerber (1985)] {DG}Deprez, O., \& Gerber, H. U. (1985). On convex principles of premium calculation. Insurance: Mathematics and Economics, 4(3), 179-189.
\bibitem[De Vylder \& Goovaerts (1980)] {VG}De Vylder, F., \& Goovaerts, M. (1980). Convexity inequalities for the Swiss premium. Bl\"atter der DGVFM, 14(3), 427-437.
\bibitem[Dhaene et al.\ (2008)]{DHLx} Dhaene, J., Henrard, L., Landsman, Z., Vandendorpe, A., \& Vanduffel, S. (2008). Some results on the CTE-based capital allocation rule. Insurance: Mathematics and Economics, 42(2), 855-863.
\bibitem[F\"ollmer \& Schied (2016)]{FS} F\"ollmer, H., \& Schied, A. (2016). Stochastic Finance. 4th edition. DeGruyter Studies in Mathematics, Berlin.
\bibitem[Furman \& Zitikis (2008)] {FZ}Furman, Z. \& Zitikis, R. (2008). Weighted premium calculation principles.  Insurance: Mathematics and Economics, 42(1), 459-465.
\bibitem[Gajek \& Zagrodny (2004)] {GZ}Gajek, L., \& Zagrodny, D. (2004). Optimal reinsurance under general risk measures. Insurance: Mathematics and Economics, 34(2), 227-240.
\bibitem[Goovaerts et al.\ (2003)]{GKD} Goovaerts, M. J., Kaas, R., Dhaene, J.,\& Tang, Q. (2003). A unified approach to generate risk measures. ASTIN Bulletin, 33(2), 173-191.
\bibitem[Hardy  et al.\ (1952)]{HLP} Hardy, G.H., Littlewood, J.E. \& Polya, G. (1952). Inequalities. Cambridge University Press, London.
\bibitem [Landsman \& Makov (2016)]{L2}Landsman, Z., \& Makov, U. (2016). Minimization of a function of a quadratic functional with application to optimal portfolio selection. Journal of Optimization Theory and Applications, 170(1), 308-322.‏
\bibitem [Landsman et al.\ (2016)]{L01}Landsman, Z., Makov, U., \& Shushi, T. (2016). Tail conditional moments for elliptical and log-elliptical distributions. Insurance:
Mathematics and Economics, 71, 179-188.
\bibitem[Landsman et al.\ (2018)] {L4}Landsman, Z., Makov, U., \& Shushi, T. (2018). A Generalized Measure for the Optimal Portfolio Selection Problem and its Explicit Solution. Risks, 6(1), 1-15.
\bibitem[Landsman \& Valdez (2003)] {L1}Landsman, Z. M., \& Valdez, E. A. (2003). Tail conditional expectations for elliptical distributions. North American Actuarial Journal, 7(4), 55-71.
\bibitem[Markowitz (1952)] {M3}Markowitz, H. (1952). Portfolio selection. The journal of finance, 7(1), 77-91.
\bibitem[McNeil et al.\ (2005)] {mfe}McNeil, A., Frey, R., \& Embrechts, P. (2005). Quantitative risk management. Princeton UP, Princeton.
\bibitem[M\"uller (2007)] {M1}M\"{u}ller, A. (2007). Certainty equivalents as risk measures. Brazilian Journal of Probability and Statistics, 1-12.
\bibitem[M{\"u}ller \& Stoyan (2002)] {ms02}M{\"u}ller, A., \& Stoyan, D. (2002). Comparison Methods for Stochastic Models and Risks, John Wiley \& Sons, Chichester.
\bibitem[Muliere \& Parmigiani (1993)]{mp}Muliere, P. \& Parmigiani, G. (1993). Utility and means in the 1930s. Statist. Sci., 8, 421-432.
\bibitem [Tsanakas (2009)]{T}Tsanakas, A. (2009). To split or not to split: Capital allocation with convex risk measures. Insurance: Mathematics and Economics, 44(2), 268-277.
\end{thebibliography}
\end{document}